\documentclass[11pt]{article}
\usepackage[utf8]{inputenc}

\usepackage{amsmath}
\usepackage{amssymb}

\usepackage[margin=3cm]{geometry}

\usepackage{graphicx} 

\usepackage{array}

\newtheorem{theorem}{Theorem}
\newtheorem{corollary}[theorem]{Corollary}

\newtheorem{lemma}[theorem]{Lemma}
\newtheorem{claim}[theorem]{Claim}
\newtheorem{definition}[theorem]{Definition}

\newtheorem{remark}[theorem]{Remark}

\newenvironment{proof}{\noindent\bf{Proof.}\rm}{\hfill$\blacksquare$\bigskip}

\newcommand{\items}{\mathcal{M}} 
\newcommand{\agents}{\mathcal{N}} 

\usepackage[]{color-edits}
\addauthor{UF}{red}

\addauthor{YG}{blue}

\begin{document}

\title{On best of both worlds allocations with subadditive valuations}

\author{Uriel Feige\thanks{Weizmann Institute, Israel. {\tt uriel.feige@weizmann.ac.il}}}

\date{27 September, 2026}

\maketitle

\begin{abstract}
We consider allocation of indivisible goods to agents with equal entitlements and subadditive valuations. As an ex-post fairness notion we consider the maximin share (MMS), and as an ex-ante fairness notion we consider the maximum expectation share (MES), which is always at least as large as the MMS, and sometimes much larger. We present a simple transformation that for every $0 < \rho  \le 1$, given any algorithm that produces $\rho$-MMS allocations, transforms it into a randomized allocation algorithm that offers $\rho$-MMS ex-post simultaneously with $\eta$-MES ex-ante. We prove  several new properties of MES, and use them to show that $\eta \ge \min[\frac{\rho}{2 + \rho}, \frac{1}{4}]$.  We also present cases in which the transformation results in a higher value of $\eta$. 

Applying our transformation to currently known allocation algorithms shows for subadditive valuations the existence of randomized allocations that are simultaneously $\Omega(\frac{1}{\log\log n})$-MES ex-ante and $\Omega(\frac{1}{\log\log n})$-MMS ex-post, and for XOS valuations the existence of randomized allocations that are simultaneously $\frac{4}{27}$-MES ex-ante and $\frac{4}{17}$-MMS ex-post. 
\end{abstract}




\section{Introduction}

There are many settings in which one seeks fair division of resources. These include dividing an inheritance, settling a divorce, distributing humanitarian aid, scheduling medical procedures, registering students to classes of limited capacity, allocating housing units to eligible residents, and more. 
We consider a standard model of fair allocation of indivisible items to agents with equal entitlements.  There is a set $\items$ of $m$ items, and there is a set $\agents$ of $n$ agents. An allocation $A = (A_1, \ldots, A_n)$ is a partition of the set $\items$ into $n$ disjoint subsets (referred to as {\em bundles}), such that for each $i \in \agents$, agent $i$ receives bundle $A_i$.  A randomized allocation is a distribution over allocations.

Each agent $i\in {\cal{N}}$ has a valuation function (also referred to just as valuation, for short) $v_i: 2^{\items} \rightarrow R$ that assigns a value to each possible subset of items, with $v_i(\emptyset) = 0$. Valuations are assumed to be monotone non-decreasing ($v_i(S) \le v_i(T)$ if $S \subset T$), signifying that we view items as {\em goods} (items that agents want to receive). 
We will mostly be interested in the following classes of valuations in the {\em complement free} hierarchy of~\cite{LLN}, stated in increasing generality:

\begin{itemize}
    \item Additive valuations: $v(S) = \sum_{e \in s} v(e)$.
    \item Submodular valuations: $v(S \cup \{e\}) - v(S) \ge v(T \cup \{e\}) - v(T)$ for every $S \subset T$ and $e \in \items$.
    \item XOS valuations: maximum over a collection of additive valuations.
    \item Subadditive valuations: $v(S) + v(T) \ge v(S \cup T)$.
\end{itemize}

Additive valuations is the class most commonly studied. However, when $m$ is large, classes higher up the hierarchy appear to better model realistic valuations. For example, when distributing humanitarian aid such as food packages to individuals in need, it is unlikely that a hungry individual will view 1000 tomatoes as 1000 times more valuable than a single tomato. 

Fairness constraints are constraints that allocations (or distributions over allocations) need to satisfy in order to be classified as fair. We consider fairness constraints towards each agent separately, giving $n$ sets of fairness constraints. An allocation is {\em acceptable for agent $i$} if it satisfies the fairness constraints of agent $i$, and it is {\em acceptable} if it satisfies the fairness constraints of all agents.


The fairness constraints that we consider are {\em share-based}, meaning that the fairness constraints of agent $i$ are based only on the set $\items$ of items, her own valuation function $v_i$, and the set $A_i$ of items allocated to $i$. The constraint is expressed via a share function $s(\items, v_i, n)$ that specifies a minimum value that $i$ expects to receive. 

We distinguish between ex-post and ex-ante fairness. 

Ex-post constraints refer to a single allocation, the one output by an allocation algorithm. The allocation $A = A_1, \ldots, A_n$ satisfies the share-based fairness constraint if $v_i(A_i) \ge s(\items, v_i, n)$. 

There are cases in which using randomized allocation algorithms is more desirable than selecting a single allocation. For example, this is the case when there are fewer items than agents. A deterministic allocation surely gives some agent no item, whereas randomized allocation algorithms can offer every agent a positive probability of getting at least one item.
Ex-ante constraints refer to a distribution over allocations that is generated by a randomized allocation algorithm. In this paper (as in most related work), these constraints refer to the {\em expected value} that an agent receives, if an allocation is drawn at random from the distribution generated by the randomized allocation algorithm. A distribution $D$ over allocations satisfies the share-based fairness constraint if $E_{(A_1, \ldots, A_n) \sim D}[v_i(A_i)] \ge s(\items, v_i, n)$.

In share-based {\em best of both worlds} (BoBW) type of results, we have two types of shares: an ex-ante share $s_a$, and an ex-post share $s_p$. We seek a distribution $D$ over allocations such that $D$ guarantees to each agent an expected value (ex-ante) of at least her $s_a$ share value, and $D$ is supported only over allocations that guarantee to each agent at least her $s_p$ share value (ex-post). This notion makes sense only if the ex-ante guarantee is stronger than the ex-post guarantee (because ex-post guarantees hold also ex-ante). Consequently, we aim to have ex-ante shares $s_a$ whose value is at least as high as the ex-post share $s_p$, and potentially much larger. 

We now introduce the two shares considered in this paper, the {\em maximin share} (MMS) ex-post (attributed to~\cite{Budish11}), and the {\em maximum expectation share} (MES) ex-ante (a related concept, without using the name MES, was introduced in~\cite{FeigeT14} in the context of fair division of a divisible good).
Conceptually, the MMS and MES are based on the same principle, and the only difference is that the MMS applies it to ex-post guarantees whereas MES applies it to ex-ante guarantees. The principle is to consider what can be guaranteed to an agent if all agents are indistinguishable (have the same entitlements and same valuation functions). The MMS is the best possible guarantee for the realized value ex-post. The MES is the best possible guarantee for the expected value ex-ante.  An immediate consequence is that the MES is at least as high as the MMS, and this holds for every possible valuation function (including also non-monotone valuations).

\begin{definition}
    \label{def:MMS}
    The {\em maximin share} (MMS) is the maximum value that an agent can secure for herself if she were to partition the set of items into $n$ parts, and receive the least valuable part.
$$MMS(\items, v_i, n) = \max_{A = (A_1, \ldots, A_n)} \min_{j \in \{1, \ldots, n\}} v_i(A_j),$$ 
where $A$ ranges over all partitions of $\items$ into $n$ parts. 
\end{definition}

\begin{definition}
\label{def:MES}
    The {\em maximum expectation share} (MES) is the maximum expected value that an agent can secure for herself if she were to partition the set of items into $n$ parts, and receive one of the parts uniformly at random.
$$MES(\items, v_i, n) = \max_{A = (A_1, \ldots, A_n)} E_{j \in \{1, \ldots, n\}} [v_i(A_j)],$$ 
where $A$ ranges over all partitions of $\items$ into $n$ parts, and expectation is taken over the uniform distribution over the parts.
\end{definition}

As the partition that maximizes the MMS is also a candidate partition for the MES, and MES takes the average value of bundles in a partition whereas MMS takes the minimum value, we always have that:

$$MES(\items, v_i, n) \ge MMS(\items, v_i, n)$$

The gap between the MES and the MMS can be unbounded, as the MMS can be~0 (this is always the case if $m < n$), whereas the MES is always positive (unless of course $v_i$ is identically~0).

For comparison, we mention here the {\em proportional share} (PS), defined as
$$PS(\items, v_i, n) = \frac{1}{n} \cdot v_i(\items).$$ 

The MES is always at least as large as PS, as we may take $A_1 = \items$ and keep the rest of the parts empty. Thus,

$$MES(\items, v_i, n) \ge PS(\items, v_i, n)$$

For additive valuations, we have that:

$$MES(\items, v_i, n) = PS(\items, v_i, n)  \ge MMS(\items, v_i, n)$$

However, for submodular valuations and beyond, the value of the proportional share might be  much smaller than the MES, and even much smaller than the MMS. (Consider for example $n$ identical items that are substitutes of each other: every non-empty set of items has value~1. The proportional share is $\frac{1}{n}$, but the MMS is~1.) 


We say that a share is {\em feasible} for a class of valuations if every allocation instance within the class has an allocation acceptable under this share (each agent receives at least her share value). The MMS is not feasible, not even for additive valuations~\cite{KPW18}. The PS is feasible ex-ante (give each agent the whole set of items with probability $\frac{1}{n}$) for all classes of valuations. The MES is feasible ex-ante for additive valuations (as it equals the PS in this case), but not feasible for submodular valuations and beyond.

Due to non-feasibility of MMS and MES, we (and others) consider approximations of shares. 

\begin{definition}
    For a share SH and a value $\rho \le 1$, a $\rho$-SH allocation is one that gives each agent at least a $\rho$ fraction of her SH value.
\end{definition}

Our goal in this paper is to design randomized allocation algorithms that provide good approximations to the MES ex-ante, and good approximation to the MMS ex-post. Results that require simultaneous ex-ante and ex-post guarantees are sometimes referred to as {\em best of both worlds} results (even if the results are not necessarily best possible), a term introduced in~\cite{FSV20BoBW}.

\subsection{Known results}

Table~\ref{tab:MMS} summarizes the values of $\rho$ for which $\rho$-MMS is currently known to be feasible ex-post for all $n$ (tighter lower bounds are known for small values of $n$). 

\begin{table}[hbt!]
        \centering
\begin{tabular}{ | m{4 cm} || m{4 cm}| m{4 cm} | } 
 \hline
 \multicolumn{3}{|c|}{Feasible $\rho$ for $\rho$-MMS ex-post} \\
 \hline
  & lower bound & upper bound  \\ 
 \hline
  \hline
 Additive \newline valuations & $\frac{7}{9}$ \; \; \cite{huangZhou2025} & $\frac{39}{40}$  \; \; \cite{FST21Negative}  \\
 \hline
Submodular \newline valuations & $\frac{10}{27}$ \; \; \cite{BUF23Submodular} & $\frac{2}{3}$   \; \; \cite{CCMS25Subadditive4}  \\ 
 \hline
 XOS  \newline valuations & $\frac{4}{17}$ \; \; \cite{FG25APSSubadditive} & $\frac{1}{2}$ \; \; \cite{GhodsiHSSY22BeyondAdditive}    \\ 
 \hline
 Subadditive \newline valuations & $\Omega(\frac{1}{\log\log n})$ \; \; \cite{feigeMulti} & $\frac{1}{2}$  \; \; \cite{GhodsiHSSY22BeyondAdditive}   \\ 
 \hline
\end{tabular}
        \caption{
        $n$ denotes the number of agents in the allocation instance.  Some previous bounds appear in~\cite{KPW18, amanatidis2017MMS34, BarmanK20Submodular, GhodsiHSSY22BeyondAdditive, BarmanK20Submodular, AkramiMSS23XOSBoBW, SS24SubadditiveXOS, FeigeHuang25, SS25Subadditive}, and more. Improved results for small values of $n$ appear in~\cite{amanatidis2017MMS34, FeigeNorkin22, CCMS25Subadditive4}, and more.}
        \label{tab:MMS}
\end{table}

For ex-ante guarantees, tight bounds are known for the classes of valuations of interest in this paper, except for submodular valuations. These bounds appear in Table~\ref{tab:MES}. 

\begin{table}[hbt!]
        \centering
\begin{tabular}{ | m{4 cm} || m{4 cm} | } 
 \hline
 \multicolumn{2}{|c|}{Feasible $\eta$ for $\eta$-MES ex-ante} \\
  \hline
 Additive \newline valuations & $1$   \\
 \hline
Submodular \newline valuations & $1 - \frac{1}{e} + \epsilon$, for some $\epsilon > 0$   \\ 
 \hline
 XOS  \newline valuations & $1 - \frac{1}{e}$    \\ 
 \hline
 Subadditive \newline valuations & $\frac{1}{2}$   \\ 
 \hline
\end{tabular}
        \caption{
         The bounds are taken from~\cite{FG25APSSubadditive}, where it is explained how they follow from earlier work. For XOS and subadditive valuations, this earlier work is~\cite{Feige09Subadditive}. For submodular, this earlier work is~\cite{FeigeV10}, which implies that $\epsilon > 0$ and gives examples where $\epsilon < 0.15$.}
        \label{tab:MES}
\end{table}

There have been a few best of both worlds type results.
For additive valuations, in the equal entitlements case, it is known that there are randomized allocations that simultaneously are ex-ante proportional and ex-post $\frac{1}{2}$-MMS. The proof is based on an extension of a certain {\em faithful randomized rounding} technique applied to a linear programming representation of the randomized proportional allocation~\cite{BEF22BoBW}. 
For XOS valuations and equal entitlements, it is known how to achieve $\frac{1}{4}$-MMS ex-ante simultaneously with $\frac{1}{8}$-MMS ex-post~\cite{AkramiMSS23XOSBoBW}. However, we would like ex-ante guarantees with respect to the MES, not the MMS.

Our focus in this work is on share-based guarantees. There is also work that considers simultaneous ex-post and ex-ante comparison-based guarantees (envy freeness and relaxations of it)~\cite{FSV20BoBW, Aziz20BoBW, AGM23BoBW, HSV23BoBW, FeldmanMNP24}. For additive valuations, strong comparison based guarantees also imply strong share based guarantees. However, for sub-additive valuations this is no longer true. Even just with submodular valuations, there are allocation instances in which envy free allocations offer only $\frac{1}{n}$-MMS.

\subsection{Our results}
\label{sec:results}

We present a straightforward ``black-box" transformation that transforms any allocation algorithm ALG that gives only ex-post guarantees to a randomized allocation algorithm RALG that offers also ex-ante guarantees. Our main technical contribution is the analysis of the ex-ante guarantees of this transformation. Among other things, this involves (see Section~\ref{sec:preliminary}) formulating and proving new properties of the MES.  

Suppose that for some $\rho$, ALG offers each agent at least $\rho \cdot MMS(\items, v_i, n)$ ex-post. We wish RALG to preserve the ex-post guarantee, and to also offer each agent at least $\eta \cdot MES(\items, v_i, n)$ ex-ante, for $\eta$ that is as large as possible as a function of $\rho$. Given a choice of $\eta$, we say that an item $e$ is {\em large} for an agent $i$ if $$v_i(e) \ge \max[\rho \cdot MMS(\items, v_i, n), \; \eta \cdot MES(\items, v_i, n)],$$ and {\em small} otherwise. Our randomized allocation algorithm RALG uses $\rho$ and $\eta$ as input parameters. It has two phases.

\begin{enumerate}
    \item Take a uniformly random permutation over $\agents$. 
(It suffices to fix one arbitrary permutation, and pick a permutation at random with uniform probability over the cyclic shifts of the fixed permutation.)
Then, each agent $i$
in her turn considers the item $e$ of highest value (according to $v_i$, breaking ties arbitrarily) among those remaining.  If $e$ is large for $i$, then $i$ takes $e$ and ends her participation in RALG. If $e$ is small for $i$, then $i$ does not take the item, and stays for the second phase of RALG.

\item Let $\items'$ and $\agents'$ denote the sets of items and agents that remain after the first phase. Run ALG with $\items'$ and $\agents'$. 
\end{enumerate}

It is quite easy to see that regardless of the value of $\eta$, RALG as described above preserves the ex-post guarantee of $\rho$-MMS of ALG. The question is what is the largest value of $\eta$ (as a function of $\rho$) for which if RALG is run with this input parameter $\eta$, it indeed guarantees each agent at least $\eta$-MES.  

\begin{theorem}
\label{thm:blackbox}
    For any $0 \le \rho \le 1$, let $\eta = \min[\frac{\rho}{2 + \rho}, \frac{1}{4}]$. 
    Let ALG be an arbitrary allocation algorithm that offers every agent $i$ with subadditive valuation $v_i$ at least $\rho \cdot MMS(\items, v_i, n)$ (ex-post). Then RALG as described above preserves the ex-post guarantee, and also guarantees at least $\eta \cdot MES(\items, v_i, n)$ ex-ante. The same holds for every subclass of subadditive, such as XOS.
\end{theorem}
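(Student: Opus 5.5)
Fix an agent $i$ and write $v=v_i$, $\mu=MMS(\items,v,n)$, $\tau=MES(\items,v,n)$, $\theta=\max[\rho\mu,\eta\tau]$. The proof has two parts.

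\textbf{Ex-post guarantee.} This is a standard reduction argument. If $i$ takes an item in phase~1, that item is large, so $v(e)\ge\rho\mu$. Otherwise, we show $MMS(\items',v,|\agents'|)\ge\mu$. Each agent who left in phase~1 removed exactly one item. Take an MMS partition of $\items$ into $n$ bundles. For each removed item, delete the bundle containing it; this deletes at most $n-|\agents'|$ bundles. Merge the leftover remaining items into the surviving bundles. Monotonicity gives at least $|\agents'|$ bundles of $\items'$, each of value at least $\mu$, after merging extra bundles if there are more than $|\agents'|$. So ALG on $(\items',\agents')$ gives $i$ at least $\rho\mu$.

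\textbf{Ex-ante guarantee.} We need $E[v(A_i)]\ge\eta\tau$. Fix an MES partition $B_1,\dots,B_n$ with $\sum_j v(B_j)=n\tau$. Let $L$ be the set of items that are large for $i$ ($v(e)\ge\theta$), with $\ell=|L|$. In phase~2, if $i$ remains, she receives at least $\rho\cdot MMS(\items',v,|\agents'|)$. The key is to lower bound this residual MMS in terms of $\tau$ when few large items have been removed. I plan to prove two new MES properties.
\begin{itemize}
\item[(a)] Small items can be packed greedily into bundles. Bundles of subadditive value in $[\theta,2\theta)$ can be formed, or values are charged via subadditivity, which gives $MMS(\items'')\gtrsim$ a constant fraction of the average value.
\item[(b)] Removing $k$ items reduces $MES$ by at most the contribution of $k$ bundles: $MES(\items\setminus S,v,n-k)\ge\frac{n\tau-\sum_{j\in J}v(B_j)}{n-k}$, where $J$ is the set of bundles hit.
\end{itemize}
Concretely, I would show that if the small items of value in $B_j$ sum to $W=\sum_j v(B_j\setminus L)$, then for $n'$ agents one can build $n'$ bundles each worth at least $\min(\theta,\ W/(2n')-\theta)$ by a bag-filling argument. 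Bag filling with subadditivity loses at most one small item, of value less than $\theta$, per bag.

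\textbf{Case analysis on $i$'s position.} Let $i$ be in position $t$, uniformly distributed over $1,\dots,n$. Before $i$'s turn at most $t-1$ items are gone. If fewer than $\ell$ large items were taken by then, $i$ gets an item worth at least $\theta$. Otherwise, all large items are gone by the time $i$ would take one. Then $i$'s expectation comes from phase~2, or from a large item when $t\le\ell$. This gives roughly $E\ge\frac{\ell}{n}\theta+\Pr[\text{reach phase 2}]\cdot\rho\cdot MMS'$. Using $v(B_j)\le v(B_j\cap L)+v(B_j\setminus L)$ and bounding $\sum_j v(B_j\cap L)$ via at most $\ell$ bundles containing large items, each bounded using subadditivity, the split becomes $n\tau\le(\text{large part})+W$. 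Either the large part is at least half, and then the $\ell$-term gives $\tau/4$ via $\theta$-values, or else $W\ge n\tau/2$ and phase~2 yields $\rho\cdot(\tau/2-\theta)$ after some items are removed.

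Balancing $\theta\ge\eta\tau$ against $\rho(\tau/2-\eta\tau)$ is what should produce $\eta=\frac{\rho}{2+\rho}$. Indeed, $\eta=\rho(1/2-\eta)\cdot 2/(\ldots)$; I expect $\eta(2+\rho)=\rho$ to fall out of an equation of the form $\eta=\rho\bigl(1-\eta\cdot\tfrac{2}{\rho}\ldots\bigr)$. The $\frac14$ cap handles the large-item regime.

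The main obstacle is that large item values are not bounded above. A single large item may carry most of the value of $B_j$, so the large part cannot be charged to $\theta$ directly. I would handle this by noting that the top-valued item removed at $i$'s turn is the most valuable remaining one. I would also use the residual-MES property (b), applied recursively, to show that removing the $t-1$ items taken before $i$ costs at most a $(t-1)/n$ fraction of the MES. Averaging over $t$ then gives the factor-$2$ loss behind $\frac14$.
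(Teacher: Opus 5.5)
Your ex-post argument is correct and is the paper's (Claim~\ref{cl:MMSgiveItem}). The ex-ante part has a genuine gap. It is never closed, and the quantitative ingredients you state cannot yield $\eta=\frac{\rho}{2+\rho}$, even though bag filling and the position argument are the right tools.

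\textbf{The bag-filling bound is wrong.} Bags must be filled inside each $B_j\cap\items'$ separately, because under subadditivity the union of all small items may be worth little more than one bundle. So up to $n$ partial bags are wasted, while only $n'\le n$ full bags overflow. The correct statement is Lemma~\ref{lem:MMSversusMES}: $MMS(\items',v,n')\ge\frac{W'-n'\beta}{n+n'}$ when every remaining item is worth at most $\beta$.
\begin{itemize}
\item Your $\frac{W}{2n'}-\theta$ has the wrong denominator and is false for small $n'$. Take $n=10$, $n'=1$, and ten mutually substitutable bundles of small items, each worth $0.4\theta$. Then $W=4\theta$ and your bound gives $\theta$, but the true MMS is $0.4\theta$.
\item At $n'=n$ your bound charges a full $\theta$ of overflow per bundle instead of $\theta/2$. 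This matters exactly at the constant. With $W'\approx n\tau$ and $\beta=\eta\tau$, the correct bound gives $\rho\frac{1-\eta}{2}\tau$, and $\rho\frac{1-\eta}{2}=\eta$ is precisely $\eta=\frac{\rho}{2+\rho}$. Your $\rho(\frac{1}{2}-\eta)=\eta$ gives only $\frac{\rho}{2+2\rho}$, so your balancing line does not produce the target.
\end{itemize}

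\textbf{The case split loses another factor of about $2$.} Under the dichotomy ``large part $\ge\frac{n\tau}{2}$ or $W\ge\frac{n\tau}{2}$'', the second branch only gives $W\ge\frac{n\tau}{2}$. That yields roughly $\rho(\frac{\tau}{4}-\frac{\theta}{2})$ in phase~2, not $\rho(\frac{\tau}{2}-\theta)$. The contribution of high-value items and the phase-2 contribution must be \emph{added}, as functions of a continuous parameter, rather than split into cases.

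\textbf{What the paper does instead.} It introduces huge items $H_i$ (value $\ge 2\eta\tau$) and $\gamma$ with $(1-\gamma)n\tau=\sum_{e\in H_i}v_i(e)$. Let $p_h$ be the probability that $i$ gets a huge item.
\begin{itemize}
\item Huge items contribute at least $\max[1-\gamma,2p_h\eta]\tau$. The first term comes from the position argument you allude to: when $|H_i|<n$, at position $k\le|H_i|$ at most $k-1$ items are gone, so $i$ gets at least the $k$-th largest huge item.
\item On the complementary event, $i$ either gets a large non-huge item (worth $\ge\eta\tau$) or reaches phase~2. There, restrict the \emph{original} MES partition to $\items'$ and subtract removed item values by subadditivity. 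This gives $W'\ge(\gamma n-2t\eta)\tau$ and $n'\le n-t$, where $t$ now counts the non-huge items removed in phase~1. Hence $MMS(\items',v_i,n')\ge\frac{(\gamma-\eta)n-t\eta}{2n-t}\tau$.
\item The cap $\eta\le\frac{1}{4}$, together with $\gamma\ge 1-\eta$, is used only to show that $t=0$ is the worst case. It plays no role in any averaging over positions.
\item The factor $2$ in the huge threshold pays for the overlap between the two lower bounds on the huge-item contribution, since $\max[1-\gamma,2p_h\eta]\ge\frac{1-\gamma}{2}+p_h\eta$. The total is then minimized at $p_h=0$, $\gamma=1$ (using $\rho\le 1$), giving $\rho\frac{1-\eta}{2}$.
\end{itemize}

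\textbf{Drop property (b) and its recursive use.}
\begin{itemize}
\item (b) is false as written when several removed items lie in one bundle. Let all $B_j$ be identical copies under a coverage valuation and remove two items of $B_1$. The residual MES is at most $\tau<\frac{(n-1)\tau}{n-2}$.
\item The claim that the first $t-1$ removals cost at most a $\frac{t-1}{n}$ fraction of the MES fails outright, since a single removed item can carry essentially all the welfare.
\end{itemize}
As the paper stresses, the MES analogue of Claim~\ref{cl:MMSgiveItem} fails. So you should track the residual welfare of the fixed partition and convert it directly to residual MMS, never the residual MES.
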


\begin{remark}
\label{rem:smalln}
     At the end of Section~\ref{sec:blackbox} we present an example showing that $\frac{\rho}{2 + \rho}$ is essentially the best bound that can be shown for our transformation, if one treats ALG as a blackbox. However, there is room for minor improvements. Specifically, the bound in Theorem~\ref{thm:blackbox} is somewhat stronger than stated, giving $\eta = \min[\frac{\rho n}{2n - 1 + (n - 1)\rho}, \frac{1}{4}]$. 
\end{remark}

Interestingly, Theorem~\ref{thm:blackbox} is not a tradeoff result between ex-post and ex-ante guarantees. Rather, it shows that some ex-ante guarantees can be given without any loss in the ex-post guarantees. The ex-ante value of $\eta$ promised in Theorem~\ref{thm:blackbox} does not seem to be very good, but one should keep in mind that without the transformation implied by the theorem, the value of $\eta$ might be much worse, even~0. 

Theorem~\ref{thm:blackbox} and Remark~\ref{rem:smalln} have the following immediate corollaries.

\begin{corollary}
\label{cor:blackbox}
    For XOS valuations, there are randomized allocations that are ex-post $\frac{4}{17}$-MMS and ex-ante $\frac{2}{19}$-MES. For subadditive valuations, there are randomized allocations that are ex-post $\Omega(\frac{1}{\log\log n})$-MMS and ex-ante $\Omega(\frac{1}{\log\log n})$-MES. If $n = 4$, there are randomized allocations that are ex-post $\frac{1}{2}$-MMS and ex-ante $\frac{4}{17}$-MES (and $\frac{1}{4}$-MES if $n \in \{2,3\}$). 
\end{corollary}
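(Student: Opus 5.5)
The corollary is obtained by plugging known ex-post results from Table~\ref{tab:MMS} into Theorem~\ref{thm:blackbox}, and for small $n$ into the sharper bound of Remark~\ref{rem:smalln}. In each case I take an allocation algorithm ALG (possibly inefficient, since only existence is claimed) that achieves the stated $\rho$-MMS ex-post guarantee. By Theorem~\ref{thm:blackbox}, RALG keeps that guarantee ex-post and in addition achieves $\eta$-MES ex-ante. It then only remains to compute $\eta$. One small check is needed: inside RALG, ALG is run on a sub-instance $(\items',\agents')$. The theorem already asserts that the ex-post guarantee is preserved, so we only need ALG to be defined on every sub-instance of the same valuation class. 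This holds because restricting a valuation to a subset of the items keeps it XOS (respectively subadditive).

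\emph{XOS.} Take $\rho=\frac{4}{17}$ from~\cite{FG25APSSubadditive}. Then
\[
\frac{\rho}{2+\rho}=\frac{4/17}{38/17}=\frac{4}{38}=\frac{2}{19}<\frac14,
\]
so $\eta=\frac{2}{19}$.

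\emph{Subadditive.} Take $\rho=c/\log\log n$ from~\cite{feigeMulti}. Then
\[
\frac{\rho}{2+\rho}\ \ge\ \frac{\rho}{3}=\Omega\!\left(\frac{1}{\log\log n}\right),
\]
and $\min\left[\cdot,\frac14\right]$ does not change this order of magnitude. (For small $n$ one can read $\log\log n$ as a constant, which is harmless.)

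\emph{Small $n$.} Here I use that $\frac12$-MMS is feasible ex-post for subadditive valuations when $n\le 4$; this is the small-$n$ result of~\cite{CCMS25Subadditive4} cited under Table~\ref{tab:MMS}. I apply Remark~\ref{rem:smalln} with $\rho=\frac12$.
\begin{itemize}
\item For $n=4$: $\frac{\rho n}{2n-1+(n-1)\rho}=\frac{2}{7+3/2}=\frac{4}{17}<\frac14$, so $\eta=\frac{4}{17}$.
\item For $n=3$: the same expression gives $\frac{3/2}{5+1}=\frac14$.
\item For $n=2$: it gives $\frac{1}{3+1/2}=\frac27>\frac14$, so the $\frac14$ cap binds.
\end{itemize}
Hence $\eta=\frac14$ for $n\in\{2,3\}$.

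The main point of care is not a computation. It is making sure the cited ex-post results really apply in this form: they must hold for every $n$ in the relevant range, and for $n=2,3$ the $\frac12$-MMS guarantee must be available, by monotonicity in $n$ or directly from the small-$n$ results cited. The arithmetic itself is routine.
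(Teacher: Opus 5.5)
Your proposal is correct and follows the paper's proof, which likewise just applies Theorem~\ref{thm:blackbox} to the algorithms of~\cite{FG25APSSubadditive} and~\cite{feigeMulti}, and Remark~\ref{rem:smalln} to the algorithm of~\cite{CCMS25Subadditive4}; your computations of $\frac{2}{19}$, $\frac{4}{17}$ and $\frac14$ are right, including the $\frac14$ cap binding at $n=2$. The one thing to drop is the appeal to ``monotonicity in $n$'', since $\rho$-MMS feasibility is not obviously monotone in the number of agents. What you need is the direct small-$n$ result for every $n'\le 4$, and you need it anyway because ALG is run on sub-instances with fewer agents.
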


\begin{proof}
    For XOS valuations, apply the transformation of Theorem~\ref{thm:blackbox} on the allocation algorithm of~\cite{FG25APSSubadditive}. For subadditive valuations, apply it on the allocation algorithm of~\cite{feigeMulti}. When $n \le 4$, apply the transformation of Remark~\ref{rem:smalln} on the allocation algorithm of~\cite{CCMS25Subadditive4}.
\end{proof}

In some cases, RALG works with higher values of $\eta$ than those stated in Theorem~\ref{thm:blackbox}. We illustrate this for XOS valuations, for which we obtain an ex-ante guarantee that is stronger than the one appearing in Corollary~\ref{cor:blackbox}.

\begin{theorem}
\label{thm:whitebox}
    For XOS valuations, there are randomized allocations that are $\frac{4}{17}$-MMS ex-post and $\frac{4}{27}$-MES ex-ante.
\end{theorem}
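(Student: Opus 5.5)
We will run RALG with $\rho = \frac{4}{17}$ and $\eta = \frac{4}{27}$. For phase~2 we use the allocation algorithm of~\cite{FG25APSSubadditive}, treated as a white box. That algorithm is known to give $\frac{4}{17}$-MMS for XOS valuations. The ex-post guarantee is then preserved exactly as in Theorem~\ref{thm:blackbox}, so the whole task is the ex-ante bound for a fixed agent $i$. Write $\mu = MES(\items, v_i, n)$.

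\textbf{Step 1: phase 1.} Condition on the position of $i$ in the random permutation. As in the proof of Theorem~\ref{thm:blackbox}, fix an MES-maximizing partition $B_1,\ldots,B_n$ for $v_i$, with $\frac{1}{n}\sum_j v_i(B_j) = \mu$. Each earlier agent removes at most one item. Hence, when $i$ arrives after $k$ removals, the remaining items still contain the partition minus $k$ items. If $i$ sees a large item, she gets at least $\eta\mu$ and we are done. Otherwise every remaining item has value below $\eta\mu$ (or below $\rho \cdot MMS$) for $i$. The earlier agents each took a single item, and that item is small for $i$, or it was taken before $i$'s turn.

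\textbf{Step 2: phase 2.} Here we exploit the white-box structure, and this is the main obstacle. The algorithm of~\cite{FG25APSSubadditive} actually guarantees a $\frac{4}{17}$ fraction of a quantity that, for XOS valuations, can be related to the fractional/APS-type share of the residual instance. It does not merely guarantee the MMS. I would show that this residual share of agent $i$ is at least a constant fraction of $\mu - (\text{value lost to removed items})$. The first argument for this uses the XOS additive representative of each $B_j$. Removing $k$ single items of value at most $\eta\mu$ each (small items) decreases the sum $\sum_j v_i(B_j)$ by at most $k\eta\mu$ in the additive clause. Averaging this fractional bound over $n' = n-k$ remaining agents then gives a lower bound on the residual share.

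\textbf{Step 3: combine over the random position.} A position-$t$ agent faces at most $t-1$ removed items. The expected value of $i$ is then at least
$$\frac{1}{n}\sum_t \min\left[\eta\mu,\ \tfrac{4}{17}\cdot \mathrm{share}_t\right],$$
where $\mathrm{share}_t \gtrsim \frac{n\mu - (t-1)\eta\mu}{n-t+1}$ up to the small-item rounding. Optimizing the worst case over the sequence of events gives the constant. The hardest step is Step~2: quantifying exactly how the residual fractional share degrades so that the arithmetic yields $\frac{4}{27}$ rather than the blackbox value $\frac{2}{19}$. One should also check that the threshold $\eta = \frac{4}{27}$ balances the large-item case against the $\frac{4}{17}$ factor, i.e.\ $\frac{4}{17}\cdot\frac{17}{27}$, suggesting that the residual share is at least $\frac{17}{27}\mu$ on average.
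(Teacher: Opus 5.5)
\textbf{The gap is in your Steps~2--3.} Averaging the surviving welfare over the $n'$ remaining agents lower-bounds the residual MES, not a residual MMS or APS. At $t=1$ your formula for $\mathrm{share}_t$ asserts $APS(\items,v_i,\frac1n)\gtrsim MES(\items,v_i,n)$, and this fails for XOS valuations even when every item is tiny. Take disjoint sets $A_1,\ldots,A_{n-1}$ of size $c$ and a set $B$. Let $v(S)=\max\bigl[\max_j \frac{\alpha}{c}|S\cap A_j|,\ \beta|S\cap B|\bigr]$ with $\alpha=\frac{n}{2n-1}$ and $\beta|B|=n-(n-1)\alpha$. The partition $(A_1,\ldots,A_{n-1},B)$ shows $MES(\items,v,n)=1$. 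However, any set of value above $\alpha$ must get that value from the $B$-clause, and in a $\frac1n$-balanced fractional partition the weighted average of $\beta|S\cap B|$ is $\frac{\beta|B|}{n}=\alpha$. Hence $APS(\items,v,\frac1n)\le\alpha\approx\frac12$.

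Now let all agents have this valuation, with $c$ and $|B|$ large. No item is large, so phase~1 does nothing. Any bound of the form ``$\frac{4}{17}$ times a share of the residual instance with $n'$ parts'' then gives at most about $\frac{2}{17}<\frac{4}{27}$. So your target (residual share $\ge\frac{17}{27}\mu$) is unattainable. The right constant for XOS is about $\frac12$, and your route, carried out with the available lemmas, just reproduces the blackbox value $\frac{2}{19}$.

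\textbf{The missing idea} is to exploit the internal structure of $ALG_4$ rather than its $\frac{4}{17}$ guarantee. An agent reaching Step~2 of $ALG_4$ receives $\frac12 APS(\items',v_i,\frac{1}{2n'})\ge\frac12 MMS(\items',v_i,2n')$. The paper bounds $MMS(\items',v_i,2n')$ directly from the MES partition by bundle filling into $k=2n'$ parts (Lemma~\ref{lem:MMSversusMESXOS}); in the example above this quantity is still about $\frac12$. Moreover, $\eta\cdot MES>\rho\cdot MMS$ for $i\in\agents_1$. So an agent of $\agents_1$ reaching that step sees no set of at most four items inside an MES bundle worth more than $\eta$, since otherwise Step~1 of $ALG_4$ would have served her. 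Filling each bundle in non-increasing order of its XOS additive representative therefore overflows by at most $\beta=\frac{\eta}{4}$. This yields $\frac{2\gamma-\eta}{12}$, roughly $\frac16$ instead of $\frac{2}{17}$.

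\textbf{Your welfare accounting is also incomplete.} Items removed before $i$'s turn can be huge for $i$, so the loss is not bounded by $(t-1)\eta\mu$. Also, Step~1 of $ALG_4$ removes up to four items per agent. The paper handles the first issue through $H_i$, $\gamma$ and $p_h$: $i$ collects $\frac1n\sum_{e\in H_i}v_i(e)$ in expectation by being the $k$-th agent and taking one of her top $k$ huge items. It handles the second through the $4\ell\eta$ term. Finally, $\frac{4}{27}$ is the largest $\eta$ for which $\frac{(\gamma-\eta/2)n-\frac72\ell\eta}{3n-2\ell}$ is minimized at $\ell=0$ when $\gamma\ge 1-\eta$; it does not come from $\frac{4}{17}\cdot\frac{17}{27}$.
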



\section{Proofs of our main results}

In our proof, we start with an allocation algorithm ALG that offers ex-post guarantees, and transform it into a randomized allocation algorithm RALG that preserves the ex-post guarantees, and adds to them ex-ante guarantees. Though our transformation is quite simple, the analysis of the ex-ante guarantees turns out to be more delicate than one would initially expect. Some inequalities that would be convenient to have do not hold for the MES. For example, consider a setting in which a single item $e$ is given to some agent $j$, and agent $j$ leaves. How does this affect agent $i$? For the MMS, regardless of the value $v_i(e)$, the MMS for agent $i$ does not decrease. For the MES, if $v_i$ is additive (in which case MES is the proportional share), then if $v_i(e) > MES_i$ the MES decreases, and if $v_i(e) \le MES_i$ the MES does not decrease (and even increases, if the inequality is strict). But if $v_i$ is subadditive (the case we address in our work), the MES might decrease even if $v_i(e) < MES_i$. See Section~\ref{sec:preliminary} for an example.


Thus, before embarking on the proofs of our main theorems, we need to develop an understanding of properties of the MES.



\subsection{Properties of the MES}
\label{sec:preliminary}

We define the {\em welfare} function $W$ of a valuation $v$ as the maximum welfare (sum of values) that can be given to $n$ agents that all have valuation $v$.

$$W(\items, v_i, n) = \max_{A = A_1, \ldots, A_n} \sum_j v_i(A_j)$$

Using the welfare function, the definition of MES resembles that of the proportional share.

$$MES(\items, v_i, n) = \frac{1}{n}W(\items, v_i, n)$$

We now compare properties of the MMS and the MES. Like the MMS, the MES is non-increasing in $n$.

\begin{claim}
\label{cl:monotone}
    For every monotone valuation $V$ and $n' \le n$,
$$MES(\items, v, n') \ge MES(\items, v, n)$$
\end{claim}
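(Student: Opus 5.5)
We need to show $\frac{1}{n'}W(\items,v,n') \ge \frac{1}{n}W(\items,v,n)$ for $n' \le n$. The plan is an averaging argument: start from an optimal partition for $n$ parts and produce a partition into $n'$ parts whose total value is at least a $\frac{n'}{n}$ fraction of the original welfare. Monotonicity alone suffices; subadditivity is not needed.

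\textbf{Step 1.} Let $A=(A_1,\ldots,A_n)$ be a partition attaining $W(\items,v,n)$, so that $\sum_j v(A_j) = n\cdot MES(\items,v,n)$.

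\textbf{Step 2.} Choose the $n'$ bundles of largest value, say $A_{j_1},\ldots,A_{j_{n'}}$. Since these are the top $n'$ values among $n$ values, their sum is at least the average taken $n'$ times:
$$\sum_{k=1}^{n'} v(A_{j_k}) \ge \frac{n'}{n}\sum_{j=1}^n v(A_j).$$
If one prefers to avoid the sorting argument, one can instead average over all $\binom{n}{n'}$ subsets of $n'$ indices; each index appears in a fraction $\frac{n'}{n}$ of them, so some subset achieves the bound.

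\textbf{Step 3.} Turn this selection into a partition of $\items$ into $n'$ parts. Add all items in the remaining $n-n'$ bundles to $A_{j_1}$ (or spread them arbitrarily among the chosen bundles). Because $v$ is monotone, no bundle's value decreases. Therefore
$$W(\items,v,n') \ge \frac{n'}{n}W(\items,v,n).$$
Dividing by $n'$ gives the claim.

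There is no real obstacle here. The only point needing care is that a partition into $n'$ parts must cover all of $\items$, and that is exactly where monotonicity of $v$ is used, when the leftover items are merged into a chosen bundle.
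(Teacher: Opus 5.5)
Your proof is correct and follows the same approach as the paper: keep the $n'$ highest-valued bundles of an optimal $n$-partition. You additionally spell out that the leftover items are merged into a chosen bundle using monotonicity, a step the paper leaves implicit.
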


\begin{proof}
    Keep only the $n'$ parts of highest value from the $n$-partition giving $MES(\items, v, n)$ its value.
\end{proof}

Let us recall a well known property of the MMS.

\begin{claim}
    \label{cl:MMSgiveItem}
     For every monotone valuation $v$, and $k < n$ and every set $S \subset \items$ of size $k$ it holds that:
     $$MMS(\items \setminus S, v, n-k) \ge MMS(\items, v, n)$$
\end{claim}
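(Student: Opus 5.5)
The plan is to work directly from an optimal MMS partition and remove the items of $S$ by a simple counting argument. Let $B = (B_1, \ldots, B_n)$ be a partition of $\items$ that attains $MMS(\items, v, n)$. By Definition~\ref{def:MMS}, every part then satisfies $v(B_j) \ge MMS(\items, v, n)$. The $k$ items of $S$ lie in at most $k$ of the parts of $B$. I would therefore select a set $J$ of exactly $k$ indices such that every part containing an item of $S$ has its index in $J$. If fewer than $k$ parts meet $S$, I would fill $J$ up with arbitrary further indices; this is possible because $k < n$.

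Next I would build a partition of $\items \setminus S$ into $n-k$ parts. The parts $B_j$ with $j \notin J$ are disjoint from $S$, and there are exactly $n-k \ge 1$ of them. I would keep these as the new parts. Every item in $\bigcup_{j\in J} B_j \setminus S$ would be added to one of the kept parts, say the first one.

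The resulting object is a partition of $\items \setminus S$ into $n-k$ parts. Each new part contains some original $B_j$ with $j \notin J$. By monotonicity of $v$, each new part therefore has value at least $v(B_j) \ge MMS(\items, v, n)$. The minimum over the new parts is thus at least $MMS(\items, v, n)$. Since $MMS(\items \setminus S, v, n-k)$ is a maximum over all such partitions, it is at least this minimum, which gives the claim.

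I expect no real obstacle here; the argument is routine. The one point that needs care is the bookkeeping when the parts meeting $S$ number fewer than $k$. That case is handled by padding $J$ to size exactly $k$, so that exactly $n-k$ parts are kept. Only monotonicity of $v$ is used, so the claim holds for all monotone valuations.
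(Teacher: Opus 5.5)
Your proof is correct and takes the same approach as the paper: the $k$ removed items meet at most $k$ bundles of an optimal $n$-partition, so at least $n-k$ bundles survive intact and certify the bound. You are more explicit than the paper in two places, padding to exactly $n-k$ kept bundles and using monotonicity to absorb the leftover items.
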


\begin{proof}
     As only $k$ items are removed by $S$, at least $n-k$  bundles of the  $MMS(\items, v, \frac{1}{n})$ partition  are still in $\items \setminus S$. They imply the inequality $MMS(\items \setminus S, v, n-k) \ge MMS(\items, v, n)$.
\end{proof}

A claim similar to Claim~\ref{cl:MMSgiveItem} fails for the MES. Moreover, already when $v$ is submodular, it fails even if $e$ has value much smaller than the MES. Here is a simple example illustrating this. Suppose that $n=2$, there are three items, $e_1$ and $e_2$ that are substitutes of each other $v(\{e_1, e_2\}) = v(e_1) = v(e_2) = 1$, and $e_3$ that has value $x$, and also marginal value $x$ with respect to any set to which it is added. Then $(\{e_1, e_3\}, \{e_2\})$ is an MES partition showing that $MES(\items, v, 2) = 1 + \frac{x}{2}$. Removing $e_3$ we have that $MES(\items \setminus \{e_3\}, v, 1) =  v(\{e_1, e_2\}) = 1$. 

Due for examples such as the one above, we need to settle for a claim that appears to be quite weak.

\begin{claim}
\label{cl:giveItem}
    For every sub-additive valuation $v$, and $k < n$ and every set $S \subset \items$ of size $k$ it holds that:

    $$MES(\items \setminus S, v, n-k) \ge MES(\items, v, n) - \frac{1}{n}\sum_{e\in S} v(e)$$
\end{claim}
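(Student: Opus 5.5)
The plan is to start from an optimal partition for $MES(\items, v, n)$, delete the items of $S$ from its bundles, and then keep only the best $n-k$ of the resulting bundles, as in the proof of Claim~\ref{cl:monotone}. Using $MES = \frac{1}{n}W$, the statement is equivalent to
$$\frac{1}{n-k}W(\items\setminus S, v, n-k) \ge \frac{1}{n}\Big(W(\items, v, n) - \sum_{e\in S} v(e)\Big).$$

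First, let $A=(A_1,\ldots,A_n)$ be a partition of $\items$ achieving $W(\items,v,n)=\sum_j v(A_j)$. Define $B_j = A_j\setminus S$. Then $(B_1,\ldots,B_n)$ is a partition of $\items\setminus S$ into $n$ parts. Since $A_j \subseteq B_j \cup \bigcup_{e\in S\cap A_j}\{e\}$, monotonicity and repeated subadditivity give
$$v(B_j)\ge v(A_j)-\sum_{e\in S\cap A_j} v(e).$$
Each $e\in S$ lies in exactly one $A_j$. Summing over $j$ therefore gives
$$\sum_j v(B_j) \ge W(\items,v,n)-\sum_{e\in S}v(e).$$

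Second, reduce from $n$ parts to $n-k$ parts. Order the $B_j$ by value and keep the $n-k$ most valuable ones. The average of the top $n-k$ values among $n$ values is at least the average of all $n$ values. So these $n-k$ bundles have total value at least $\frac{n-k}{n}\sum_j v(B_j)$.

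The remaining $k$ bundles must still be allocated to obtain a partition of $\items\setminus S$. Add them to any one of the kept bundles; by monotonicity this does not decrease any kept bundle's value. This yields a partition of $\items\setminus S$ into $n-k$ parts, so
$$W(\items\setminus S,v,n-k)\ge \frac{n-k}{n}\Big(W(\items,v,n)-\sum_{e\in S}v(e)\Big).$$
Dividing by $n-k$ gives exactly the claimed inequality.

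No step is a real obstacle. The only points needing care are that subadditivity is applied to the decomposition of $A_j$ into $B_j$ and the singletons of $S\cap A_j$, and that monotonicity lets the discarded bundles be absorbed without loss.
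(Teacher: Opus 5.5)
Your proposal is correct and follows the paper's proof step for step. Like the paper, you delete $S$ from the bundles of an optimal $MES(\items,v,n)$ partition, use subadditivity to charge the loss to $\sum_{e\in S}v(e)$, keep the $n-k$ most valuable bundles, and absorb the leftover items by monotonicity.
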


\begin{proof}
    Let $A_1, \ldots, A_n$ be an optimal $MES(\items, v, n)$ partition, meaning that $MES(\items, v, n) = \frac{1}{n}\sum_{j=1}^n v(A_j)$. For each $j$, let $A'_j$ be $A'_j = A_j \setminus S$. Using subadditivity of $v$ and the fact that each item from $S$ appears only in one set $A_j$, we have that $\sum_{j=1}^n v(A'_j) \ge \sum_{j=1}^n v(A_j) - \sum_{e\in s} v(e)$. Hence, the average value (over all choices of $j$) of a set $A'_j$ is at least $ MES(\items, v, n) - \frac{1}{n}\sum_{e\in s} v(e)$. By considering the $n - k$ highest valued sets from this $n$-partition, we conclude that $MES(\items \setminus S, v, n-k) \ge MES(\items, v, n) - \frac{1}{n}\sum_{e\in s} v(e)$. (By monotonicity of $v$, leftover items can be added in an arbitrary manner to the $n-k$ parts.)
\end{proof}

We now turn to compare between the values of the MES and MMS. 
By definition, the MES is always at least as large as the MMS. The MMS might not provide any approximation to the MES. If there are fewer items than agents, the MMS is~0, whereas the MES is always at least $\frac{1}{n} \cdot v_i(\items)$. 
We observe that for sub-additive valuations (and subclasses of sub-additive), either the MES is of the same order of magnitude as the MMS, or there are single items whose value is at least a constant fraction of the MES. (This is not true for valuations beyond sub-additive.)

\begin{lemma}
    \label{lem:MMSversusMES}
    For any subadditive valuation $v$, every $k$ and every $\beta > 0$, if  $v(e) \le \beta \cdot MES(\items, v, n)$ for every item $e \in \items$, then  $$MMS(\items, v, k) \ge \alpha_{k, \beta} \cdot MES(\items, v, n),$$ where
   $$\alpha_{k,\beta} = \frac{n - (k-1)\beta}{n + k -1} > \frac{n - k\beta}{n+k}$$
   In particular, $$MMS(\items, v, n) > \frac{1 - \beta}{2} \cdot MES(\items, v, n).$$
\end{lemma}

\begin{proof}
    To simplify notation, without loss of generality we may assume that $MES(\items, v, n) = 1$.
    
    Let $A_1, \ldots, A_n$ be a partition of $\items$ that satisfies $\sum_{j=1}^n v(A_i) = n$. Using this partition, we shall create a new partition $B_1, \ldots , B_k$ with $v(B_j) \ge \alpha_{k, \beta}$ for every $j$. This will serve as an MMS partition that proves the lemma.

    For each part $A_j$  of the MES partition do the following ``bundle filling" procedure. Open a fresh bundle and place items one by one in the bundle, until the first time in which the value of the bundle reaches or exceeds $\alpha$. At that point we declare this bundle ``full", add it as a bundle of the MMS partition, and restart the process with the items that still remain in $A_j$ and a new fresh bundle to fill. The process ends when no items remain in $A_j$. At that point, there might be a ``partial" bundle whose value has not yet reached $\alpha$. 

    When the above process ends for all $A_j$, we have some full bundles, and at most $n$ partial bundles (at most one partial bundle for each part of the MES partition). We need to show that the number of full bundles is at least $k$. Equivalently, we need to show that $\sum_{j = 1}^t n_j \ge k$, where $n_j$ is the number of full bundles produced from $A_j$.

Every partial bundle has value smaller than $\alpha$. Every full bundle has value at most $\alpha + \beta$. This follows from subadditivity of $v$, because prior to receiving its last item it had value at most $\alpha$, and the last item had value at most $\beta$.

    Consider any bundle $A_j$. By subadditivity, its value is at most $n_j \cdot (\alpha + \beta) + \alpha$. Thus, $n \le \sum_{j \in [n]} v(A_i) \le (\sum_{j = 1}^t n_j) \cdot (\alpha + \beta) + n \cdot \alpha$.  In any negative example (meaning that $\alpha$ is too large to produce $k$ full bundles) $\sum_{j = 1}^t n_j < k$, and then we have that $n < (n + k - 1)\alpha + {(k-1)\beta}$. Thus, if 
    $$\alpha \le \frac{n - (k-1)\beta}{n+k-1}$$
    there cannot be a negative example, and then $MMS(\items, v, k) \ge \alpha MES(\items, v, n)$.
\end{proof}

For XOS (and sub-additive) valuations, the value of $\alpha_{k,\beta}$ in Lemma~\ref{lem:MMSversusMES} cannot be improved upon if $c \cdot \beta = \alpha_{k,\beta}$ for an integer $c$. Suppose that $MES(\items, v, n)$ is just slightly below~1. The negative examples are ones in which $n-1$ bundles of the $MES(\items, v, n)$ partition have value slightly below $\alpha_{k, \beta}$, and the remaining bundle has $kc - 1$ items of value $\beta$. With $\alpha_{k,\beta} = \frac{n - (k-1)\beta}{n + k -1}$, $MES(\items, v, n)$ is just slightly below~1, whereas $MMS(\items, v, k)$ (and also the $APS(\items, v, k)$, see definition in Section~\ref{sec:XOS}) have value  $\alpha_{k, \beta}$.

A proof technique somewhat similar to that of Lemma~\ref{lem:MMSversusMES} was used under the name of {\em set splitting} in~\cite{DobzinskiLRV24}, though the context there is different, and the analysis there would not give sufficiently strong bounds in our context.

\subsection{A blackbox transformation}
\label{sec:blackbox}

Here we prove Theorem~\ref{thm:blackbox}

\begin{proof}
Let ALG be an allocation algorithm for instances with subadditive valuations, and let $\rho > 0$ be such that ALG outputs $\rho$-MMS allocations. Recall from Section~\ref{sec:results} the randomized allocation algorithm RALG and the notion of large items. We need to determine how to set the input parameter $\eta$ for RALG.

We first note that the ex-post guarantees of RALG do not depend on $\eta$.

\begin{claim}
\label{cl:expost2}
    Every agent $i$ 
    gets ex-post a value of at least $\rho \cdot MMS(\items, v_i, n)$.
\end{claim}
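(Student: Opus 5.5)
We split according to whether agent $i$ takes an item in the first phase of RALG or stays for the second phase. The two cases are handled separately, and only the second needs an earlier result.

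\emph{Agent $i$ leaves in the first phase.} Then $i$ took an item $e$ that is large for her. By the definition of large items,
$$v_i(e) \ge \max[\rho \cdot MMS(\items, v_i, n), \; \eta \cdot MES(\items, v_i, n)] \ge \rho \cdot MMS(\items, v_i, n).$$
So the ex-post guarantee holds immediately, and this case needs no further argument.

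\emph{Agent $i$ stays for the second phase.} Let $k$ be the number of agents who left before $i$'s turn. Each of them took exactly one item, so exactly $k$ items had been removed when $i$ moved.

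After $i$'s turn, further agents may still take items. I will bound the total number of removed items using $i$'s own decision. Agent $i$ saw the highest-valued remaining item and found it small, so every item remaining at her turn has $v_i$-value below the largeness threshold. Even so, later agents each remove one item.

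Let $k'$ be the total number of agents who leave in the first phase. Then $|\items \setminus \items'| = k'$ and $|\agents'| = n - k'$. Since $i \in \agents'$, we have $k' < n$.

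Claim~\ref{cl:MMSgiveItem}, applied with $S = \items \setminus \items'$ (a set of size $k' < n$), gives
$$MMS(\items', v_i, n-k') \ge MMS(\items, v_i, n).$$
The sub-instance $(\items', \agents')$ has agent $i$ with the same subadditive valuation $v_i$, restricted to $\items'$. Restriction preserves subadditivity and membership in any subclass such as XOS, so ALG applies to this sub-instance. ALG then gives $i$ a bundle of value at least
$$\rho \cdot MMS(\items', v_i, n-k') \ge \rho \cdot MMS(\items, v_i, n).$$

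Two points need care.

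\begin{itemize}
\item First, Claim~\ref{cl:MMSgiveItem} holds for any removed set of size $k'$, regardless of item values. Therefore the bound on the removed items' values, observed above, is not actually needed for this step.
\item Second, ALG's guarantee is stated for every instance, and we must apply it to one whose agent set and item set are random. Conditioning on the outcome of the first phase, this is a deterministic instance, and the guarantee holds for every realization of the random permutation. The bound is therefore ex-post, i.e., it holds on every allocation in the support of RALG's distribution.
\end{itemize}

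The only point I regard as a possible obstacle is the degenerate case where $\items'$ is empty or $k'$ reaches $n$. The latter cannot happen while $i$ remains, since $i \in \agents'$. If $\items' = \emptyset$, then $n - k' \ge 1$ bundles must be covered by the original MMS partition minus $k'$ items. In that case $MMS(\items, v_i, n) = 0$ and the bound is trivial, which is consistent with Claim~\ref{cl:MMSgiveItem}.
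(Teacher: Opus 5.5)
Your proof is correct and follows the same route as the paper. The large-item case is immediate from the definition of large items; for an agent who reaches the second phase, you apply Claim~\ref{cl:MMSgiveItem} to the $k' < n$ items removed in the first phase and then use ALG's $\rho$-MMS guarantee on $(\items', \agents')$, and your extra remarks on restricted valuations, per-realization validity, and the case $\items' = \emptyset$ are sound but not needed.
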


\begin{proof}
By the definition of large items, it suffices to consider the case that $i$ did not get a large item. In this case, $i \in \agents'$. 
  By Claim~\ref{cl:MMSgiveItem}, $MMS(\items', v_i, n') \ge MMS(\items, v_i, n)$. By the guarantees of ALG, agent $i$ gets value at least $\rho \cdot MMS(\items', v_i, n') \ge \rho \cdot MMS(\items, v_i, n)$.  
\end{proof}

We now analyse the ex-ante guarantees. These clearly depend on the choice of $\eta$. Let $\agents_1$ denote those agents $i$ for which $$\rho \cdot MMS(\items, v_i, n) < \eta\cdot MES(\items, v_i, n).$$

\begin{claim}
\label{cl:phase2}
 Every agent $i \not\in \agents_1$ gets at least $\eta$-MES ex-ante.   
\end{claim}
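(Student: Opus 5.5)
For $i \notin \agents_1$ we have $\rho \cdot MMS(\items, v_i, n) \ge \eta \cdot MES(\items, v_i, n)$, so the plan is to show that agent $i$ gets value at least $\rho \cdot MMS(\items, v_i, n)$ in every realization of RALG. Taking expectations then gives at least $\eta \cdot MES(\items, v_i, n)$ ex-ante. Since the statement is deterministic per realization, no probabilistic reasoning over the random permutation is needed.

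Fix any realization of the random permutation and consider two cases for agent $i$.

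\emph{Case 1: $i$ takes an item $e$ in the first phase.} Then $e$ is large for $i$, so by definition
$$v_i(e) \ge \max[\rho \cdot MMS(\items, v_i, n), \; \eta \cdot MES(\items, v_i, n)] \ge \eta \cdot MES(\items, v_i, n).$$
By monotonicity, $i$ ends with value at least $v_i(e)$. (This case does not even use $i \notin \agents_1$.)

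\emph{Case 2: $i$ remains for the second phase.} Here we invoke Claim~\ref{cl:expost2}, which says $i$ receives at least $\rho \cdot MMS(\items, v_i, n)$ ex-post. The underlying reason is Claim~\ref{cl:MMSgiveItem}: each departing agent removed exactly one item, so $MMS(\items', v_i, n') \ge MMS(\items, v_i, n)$, and ALG guarantees $\rho$ times the former. Since $i \notin \agents_1$, this value is at least $\eta \cdot MES(\items, v_i, n)$.

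In both cases the realized value is at least $\eta \cdot MES(\items, v_i, n)$, so the expectation is as well.

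The only point needing care is that Claim~\ref{cl:MMSgiveItem} requires $k < n$ removed items. This holds whenever $i$ is still present, since at most $n-1$ other agents have left, each taking exactly one item. So I do not expect a real obstacle here; the genuine difficulty of the theorem lies in handling agents in $\agents_1$, which is not part of this claim.
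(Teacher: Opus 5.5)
Your proof is correct and is essentially the paper's argument: Claim~\ref{cl:expost2} gives an ex-post guarantee of $\rho \cdot MMS(\mathcal{M}, v_i, n)$, which for $i$ outside $\mathcal{N}_1$ is already at least $\eta \cdot MES(\mathcal{M}, v_i, n)$, and ex-post guarantees carry over to ex-ante. Your two-case split, including the check that $k<n$ in Claim~\ref{cl:MMSgiveItem}, just unpacks the proof of Claim~\ref{cl:expost2}, so it adds detail but changes nothing.
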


\begin{proof}
    The ex-post guarantee of $\rho \cdot MMS(\items, v_i, n)$ from Claim~\ref{cl:expost2} implies also an ex-post guarantee of $\eta \cdot MES(\items, v_i, n)$, because  $i \not\in \agents_1$. Every ex-post guarantee holds also ex-ante.
\end{proof}

We now analyse the ex-ante guarantees of an agent $i \in \agents_1$. Note that as $i \in \agents_1$, an item $e$ is large for $i$ if $v_i(e) \ge \eta \cdot MES(\items, v_i, n)$. 
Let $L_i$ denote the set of items that are large for $i$, and let $H_i$ denote the set of items that are ``huge" for $i$, of value at least $2\eta \cdot MES(\items, v_i, n)$ (every huge item is also large). Define $V(H_i)$ as $V(H_i) = \sum_{e \in H_i} v_i(e)$. 
We define $\gamma$ to be $\gamma = 1 - \frac{V(H_i)}{W(\items, v_i, n)}$. The value of $\gamma$ is a lower bound on the fraction of welfare (recall definition of welfare in Section~\ref{sec:preliminary}) that is left in $\items$ if all huge items are removed.

\begin{claim}
    \label{cl:exante}
    Suppose that $\eta \le \frac{1}{4}$. Consider agent $i\in \agents_1$, and assume that her valuation is scaled so that $MES(\items, v_i, n) = 1$. Let $p_h$ denote the probability that in the first phase $i$ receives an item from $H_i$. Then  $i$ gets ex-ante at least the minimum of the following options.
    
    \begin{enumerate}
        \item $\eta$.
        \item $\max[1 - \gamma, 2p_h \eta] + (1 - p_h)\min[\eta, \frac{\rho \cdot (\gamma - \eta)}{2}]$.
    \end{enumerate}
\end{claim}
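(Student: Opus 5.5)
The plan is to fix agent $i\in\agents_1$ with $MES(\items,v_i,n)=1$ and follow her through RALG. Because $i\in\agents_1$, an item is large for $i$ exactly when its value is at least $\eta$. So whenever the best remaining item at her turn has value $\ge\eta$, she takes it and gets at least $\eta$. Otherwise she enters phase~2. There, every item of $\items'$ has value $<\eta$, since even the best item at her turn was small and $\items'$ is a subset of what remained then. I would split the randomness of the uniformly random position $t\in\{1,\ldots,n\}$ of $i$ into three events: she takes a huge item (probability $p_h$), she takes a large but non-huge item, or she goes to phase~2. The last two together have probability $1-p_h$. If $|H_i|\ge n$, then at most $n-1$ items are taken before her turn, so a huge item always remains and she takes one. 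Then $p_h=1$ and she gets $\ge 2\eta\ge\eta$, which is option~1. So I assume $|H_i|<n$.

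\emph{First term.} The bound $2p_h\eta$ is immediate, since huge items have value $\ge 2\eta$. For $1-\gamma=V(H_i)/n$, recall $W(\items,v_i,n)=n$. Order the huge items as $h_1\ge h_2\ge\cdots$. At position $t$ at most $t-1$ items have been removed. Hence if $t\le|H_i|$ some huge item of value at least $h_t$ remains, and $i$ takes the best remaining item, which is huge and worth at least $h_t$. Averaging over the uniform $t$ gives expected value from huge items at least $\frac1n\sum_t h_t=V(H_i)/n=1-\gamma$. Thus the contribution of the huge event is at least $\max[1-\gamma,2p_h\eta]$.

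\emph{Second term.} On the complementary event, of probability $1-p_h$, it suffices to show that in phase~2 she gets at least $\frac{\rho(\gamma-\eta)}{2}$; in the other sub-event she already gets at least $\eta$. By the guarantee of ALG she gets $\rho\cdot MMS(\items',v_i,n')$, so the task is to show
\[
MMS(\items',v_i,n')\ \ge\ \tfrac{\gamma-\eta}{2}.
\]
In this event every huge item was removed before her turn, since otherwise she would have taken one. I would take an optimal MES partition $A_1,\ldots,A_n$ of $\items$ and delete all removed items from its parts. Deleting $H_i$ costs at most $V(H_i)$ by subadditivity, leaving total welfare at least $\gamma n$. The remaining removed items are of two kinds. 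Those removed before her turn are non-huge, so each has value $<2\eta$. Those removed after her turn each have value $<\eta$. The total number of removed items is $n-n'$.

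I would then apply the bundle-filling argument of Lemma~\ref{lem:MMSversusMES} directly to the $n$ truncated parts, with threshold $\alpha$, item bound $\beta=\eta$ and target $k=n'$. This avoids first passing through $MES(\items',v_i,n')$ via Claim~\ref{cl:giveItem}. The resulting inequality has the form
\[
T\le F(\alpha+\eta)+n\alpha,
\]
where $T$ is the total value of the truncated parts and $F$ is the number of full bundles. I then want $F\ge n'$ whenever $\alpha\le(\gamma-\eta)/2$.

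\emph{Main obstacle.} The naive subtraction $T\ge\gamma n-2\eta(n-n')$ only yields roughly $(\gamma-2\eta)/2$. The fix I would try first is to charge each removed non-huge item against the agent who left with it. Every departing agent reduces the required number of full bundles by one, and so frees roughly $\alpha+\eta$ of welfare in the counting, while the removed item costs less than $2\eta$. Since $\eta\le\frac14$ and $\alpha\le\frac{\gamma-\eta}{2}$ is small, the budget $\alpha+\eta$ should absorb this loss up to the lower-order terms $n'\eta$ and $n\alpha$. This is exactly where I expect the hypothesis $\eta\le\frac14$ and the exact constant $\frac{\gamma-\eta}{2}$ to be forced, and where the bookkeeping may need the finer $\frac{n-(k-1)\beta}{n+k-1}$ form of the lemma.

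Combining the three events then gives expected value at least the minimum of options~1 and~2.
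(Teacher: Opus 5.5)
Your treatment of the huge items is fine and matches the paper: the reduction to $|H_i|<n$, and the two lower bounds $2p_h\eta$ and $1-\gamma$ obtained by averaging over positions. The gap is in the second term. There you set out to prove $MMS(\items',v_i,n')\ge\frac{\gamma-\eta}{2}$ on every run that reaches phase~2, and this is false in general.

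Your own charging heuristic shows why. A departing agent frees $\alpha+\eta$, while her item costs up to $2\eta$. Absorption therefore needs $\alpha\ge\eta$, i.e.\ $\gamma\ge 3\eta$ when $\alpha=\frac{\gamma-\eta}{2}$. So $\alpha$ has to be large, not small, and $\eta\le\frac14$ alone does not give this.

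Here is a concrete failure. Take $v(S)=\max_j w(S\cap A_j)$ for an additive $w$; this is XOS, and its $n$-agent welfare is $\sum_j w(A_j)$. Build the bundles as follows.
\begin{itemize}
\item $A_1=\{h\}$ is a single huge item.
\item Every $A_j$ with $j\ge 2$ holds tiny items of total weight just below $\frac{\gamma-\eta}{2}$.
\item About $t\approx n\frac{\gamma+\eta}{4\eta}$ of these bundles also hold one item of weight just below $2\eta$.
\end{itemize}
This is consistent with $W=n$ and $t<n-1$ exactly when $\gamma<3\eta$. Let all agents have valuation $v$. Then $MMS(\items,v,n)<\frac{\gamma-\eta}{2}<\eta$, so $i\in\agents_1$. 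The first $t+1$ agents take $h$ and the $t$ large items, and with probability $\frac{n-1-t}{n}>0$ agent $i$ reaches phase~2. There every subset of $\items'$ has value below $\frac{\gamma-\eta}{2}$. The claim survives in this instance only through option~1, since $1-\gamma>\eta$.

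The missing idea is the case split the paper makes right after the $1-\gamma$ bound. The huge items alone give $i$ expected value at least $1-\gamma$. So if $1-\gamma\ge\eta$, option~1 already holds; otherwise $\gamma>1-\eta\ge 3\eta$. This is where $\eta\le\frac14$ enters, not in any finer bookkeeping.

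With this split, your charging closes in one line, and you do not need the $k-1$ form of Lemma~\ref{lem:MMSversusMES}. Let $t=n-n'-|H_i|$ be the number of non-huge removed items. Then $T\ge\gamma n-2\eta t$ and $n'\le n-t$. For $\alpha=\frac{\gamma-\eta}{2}$ this gives $T-n'(\alpha+\eta)-n\alpha\ge t(\alpha-\eta)\ge 0$, which forces at least $n'$ full bundles. This is the same computation as the paper's observation that $\frac{(\gamma-\eta)n-t\eta}{2n-t}$ is nondecreasing in $t$ once $\gamma\ge 3\eta$.

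Incidentally, your ``naive'' bound $\gamma n-2\eta(n-n')$ charges the huge items twice, since they are already excluded from $\gamma n$. Even so, it leaves slack $(n-n')(\alpha-\eta)$. So the real obstacle was never the bookkeeping; it was the sign of $\alpha-\eta$.
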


\begin{proof}
Let $A_1, \ldots, A_n$ denote the MES partition for agent $i$. By the assumption that $MES(\items, v_i, n) = 1$, we have that the total welfare $W_i$ satisfies $\sum_j v_i(A_j) = n$.


We consider separately the contribution of $H_i$ towards the ex-ante value of $i$, and the contribution of $\items \setminus H_i$.

For the contribution of $H_i$ we have two lower bounds.

If $i$ gets an item from $H_i$, then she gets value at least $2\eta$. As $p_h$ denotes the probability of getting such an item, this gives a lower bound of $2p_h \eta$. This lower bound implies that we can assume that $p_h \le \frac{1}{2}$, as otherwise the minimum of the two option in the claim is $\eta$ from option~1.

The assumption that $p_h < 1$ implies that $|H_i| < n$, because otherwise $i$ would surely get an item from $H_i$ in the first phase. Then, for every $k \le |H_i|$, agent $i$ has probability $\frac{1}{n}$ of being the $k$th agent in the random permutation of the first phase, and then get one of her top $k$ items in $H_i$. Thus, in expectation, she gets value at least $\frac{1}{n} \sum_{e \in H_i} v_i(e)$. By the definition of $\gamma$, this is $(1 - \gamma) \cdot MES$. Note that this lower bound allows us to assume that $\gamma \ge 1 - \eta$. 

So far, we established a lower bound of $\max[1 - \gamma, 2p_h \eta]$ on the contribution of the items in $H_i$. We now turn to consider the contribution of the items of $M \setminus H_i$. There are two possibilities for the source of this contribution, and the minimum of the two applies. One possibility is that $i$ gets an item in the first phase. This item will necessary be from $L_i \setminus H_i$, and will have value at least $\eta$. The other possibility is that $i$ reaches the second phase of RALG. We now analyse the contribution of the second phase. Importantly,  Claim~\ref{cl:MMSgiveItem} does not hold for the MES, and hence cannot be used in our analysis. 

Let $W_i^{-H} = \sum_j v_i(A_j \setminus H_i)$ denote the welfare that remains after all items of $H_i$ are removed. By definition of $\gamma$, $W_i^{-H} \ge \gamma n$. Note that the number of remaining agents decreased to $n - |H_i|$.
In addition, there were other items taken in the first phase (in particular, all items of $L_i \setminus H_i$). Let $t$ denote their number. As each of these items has value at most $2\eta$, the welfare $W'_i$ that remains is at least $W'_i \ge \gamma n - 2t\eta$, and the number of agents that remain is $n' = n - |H_i| - t$.  

Consider the set of items that remain at the beginning of the second phase, and denote it by $\items'$. 
No item $e \in \items'$ has value above $\eta$. Scale $v_i$ to $v'_i = \frac{n}{W'_i} \cdot v_i$, so that $W(\items', v'_i, b) \ge n$. Relative to $v'_i$, no item has value larger than $\eta \cdot \frac{n}{W'_i}$. Applying Lemma~\ref{lem:MMSversusMES} (we use here the relaxed equality with that has the term $k$ rather than $k-1$), we then have that 
$$MMS(\items', v'_i , n') \ge \frac{n - n' \frac{n}{W'_i}\eta}{n + n'}$$
which after replacing $v'_i$ by $v_i$, and using $n' \le n - t$ and $W'_i \ge \gamma n - 2t\eta$, gives
$$MMS(\items', v_i , n') \ge \frac{W'_i - n' \eta}{n + n'} \ge \frac{(\gamma - \eta)n - t\eta}{2n - t}$$
As Claim~\ref{cl:exante} assumes that $\eta \le \frac{1}{4}$ and the proof so far showed that we can assume that $\gamma \ge 1- \eta$, the above expression is minimized when $t=0$. Hence $MMS(\items', v_i , n') \ge \frac{(\gamma - \eta)}{2}$.
    
As RALG runs ALG on $(\items', \agents')$, agent $i$ gets at least $\rho \cdot MMS(\items', v_i, n') \ge \rho \frac{(\gamma - \eta)}{2}$. This completes the proof of Claim~\ref{cl:exante}.
\end{proof}

Considering the two options in Claim~\ref{cl:exante}, we show that if the second option is smaller, namely $\max[1 - \gamma, 2p_h \eta] + (1 - p_h)\min[\eta, \frac{\rho \cdot (\gamma - \eta)}{2}] \le \eta$, then the second option is lower bounded by $\frac{\rho(1 - \eta)}{2}$. Ignoring the term $1 - \gamma$, the second option is lower bounded by a convex combination of the term $2\eta$ and the term $\min[\eta, \frac{\rho \cdot (\gamma - \eta)}{2}]$. For this convex combination to be smaller than $\eta$, the minimum for the second term is attained in $\frac{\rho \cdot (\gamma - \eta)}{2}$. Also, we can lower bound $\max[1 - \gamma, 2p_h \eta]$ by the average $\frac{1 - \gamma}{2} + p_h \eta$. Thus the second option simplifies to 

$$\frac{1 - \gamma}{2} + p_h \eta + (1 - p_h)\frac{\rho \cdot (\gamma - \eta)}{2}$$

As we can assume that $\frac{\rho \cdot (\gamma - \eta)}{2} < \eta$, the expression is minimized when $p_h = 0$. As $\rho, \gamma \le 1$, the expression is minimized when $\gamma = 1$. Hence, the second option in Claim~\ref{cl:exante} is lower bounded by $\frac{\rho(1 - \eta)}{2}$.
For the choice of $\eta = \frac{\rho}{2 + \rho}$ (as in the statement of Theorem~\ref{thm:blackbox}), options 2 is lower bounded by $\eta$, as desired. Hence, agents in $\agents_1$ get the required ex-ante guarantee. This, in combination with Claims~\ref{cl:expost2} and~\ref{cl:phase2}, completes the proof of Theorem~\ref{thm:blackbox}.
\end{proof}

In order to prove Remark~\ref{rem:smalln}, in the proof of Claim~\ref{cl:exante} we use the tighter bound (with $k-1$, not $k$) of Lemma~\ref{lem:MMSversusMES}, and get for the second phase the guarantee of $\frac{\rho(n - (n-1)\eta}{2n-1}$. This gives $\eta = \frac{\rho n}{2n-1 + \rho(n-1)}$.

Finally, we provide an example showing that the blackbox approach of Theorem~\ref{thm:blackbox} sometimes gives bounds no better than $\eta = \frac{\rho}{2 + \rho}$ (up to terms that tend to~0 as $n$ increases). The example is the one following the proof of Lemma~\ref{lem:MMSversusMES}, with $k = n$. Recall that in that example $\beta$ needs to divide $\alpha_{n,\beta}$. We assume that $\rho$ is such that it will lead to this divisibility requirement. We set $\beta$ slightly smaller than $\frac{\rho}{2 + \rho}$ which gives roughly $\alpha_{n, \beta} = \frac{1}{2 + \rho}$. Then it might be that no agent would like to take an item in the first phase, and the agents enters the second phase with an MMS value that $\frac{1}{2 + \rho}$-MES. If we treat ALG as a blackbox, then all we can assume is that the agent gets at least a $\rho$ fraction of her MMS, which is roughly $\frac{\rho}{2 + \rho}$-MES. In this respect, the choice of $\eta = \frac{\rho}{2 + \rho}$ in Theorem~\ref{thm:blackbox} is best possible (up to terms that tend to~0 as $n$ increases).

\subsection{Looking inside the blackbox}
\label{sec:XOS}

We recall the allocation algorithm of~\cite{FG25APSSubadditive} that for XOS valuations gives $\frac{4}{17}$-MMS allocations. In fact, it gives $\frac{4}{17}$-APS allocations. APS, the anyprice share~\cite{BEF24APS}, is a share designed for instances with arbitrary entitlements, in which each agent has entitlement $b_i > 0$, and $\sum b_i = 1$.
The APS can be viewed as a fractional analog of the MMS.  

\begin{definition}
    \label{def:APS}
    A fractional partition $FP$ of $\items$ is a collection $(S_1, S_2, \ldots)$ of subsets of items with associated positive weights $(\lambda_1, \lambda_2, \ldots)$ with $\sum_j \lambda_j = 1$. $FP$ is $b_i$-balanced if for every item $e\in \items$ it holds that $\sum_{j | e \in S_j} \lambda_j = b_i$. Then:
$$APS(\items, v_i, b_i) = \max_{FP} \min_{S \in FP} v_i(S),$$
where $FP$ ranges over all $b_i$-balanced fractional partitions of $\items$.
\end{definition}

 For entitlements of the form $\frac{1}{n}$ (which is required for the MMS to be defined), $APS(\items, v_i, \frac{1}{n}) \ge MMS(\items, v_i, n)$ always holds.

We use the following lemma, proved in~\cite{GhodsiHSSY22BeyondAdditive} for MMS, and extends to APS, as observed in~\cite{FG25APSSubadditive}. 

\begin{lemma}
    \label{lem:XOS}
    For agents with XOS valuations, there is an allocation that gives every agent $i$ value at least $\frac{1}{2} \cdot MMS(\items, v_i, 2n)$, and moreover, at least $\frac{1}{2} \cdot APS(\items, v_i, \frac{1}{2n})$.
\end{lemma}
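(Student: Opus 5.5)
The plan is to prove the APS version, since $APS(\items, v_i, \frac{1}{2n}) \ge MMS(\items, v_i, 2n)$ gives the MMS version for free. I would scale each $v_i$ so that $APS(\items, v_i, \frac{1}{2n}) = 1$ and give a constructive procedure in which every agent ends up with value at least $\frac{1}{2}$. For each agent $i$, fix a $\frac{1}{2n}$-balanced fractional partition $(S_j, \lambda_j)$ with $v_i(S_j) \ge 1$ for all $j$. Since $v_i$ is XOS, for each $j$ fix an additive clause $a^i_j \le v_i$ with $a^i_j(S_j) = v_i(S_j) \ge 1$.

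\emph{Step 1 (large items).} While some remaining agent values some single remaining item at least $\frac{1}{2}$, give that item to that agent and remove both. I would check that this keeps the invariant for every remaining agent $i$. Delete the bundles $S_j$ that contain the item, whose total weight is $\frac{1}{2n}$. Renormalizing the rest gives a fractional partition in which every item has coverage $\frac{1}{2n-1}$, which is at most $\frac{1}{2(n-1)}$. Since the APS is monotone in the entitlement, agent $i$ still has APS value at least $1$ at entitlement $\frac{1}{2n'}$ for the reduced number $n'$ of agents. After this step, every item is worth less than $\frac{1}{2}$ to every remaining agent.

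\emph{Step 2 (bag filling with minimal bundles).} Repeatedly choose a set $T$ of remaining items that is inclusion-minimal with respect to the property ``some remaining agent values $T$ at least $\frac{1}{2}$''. Give $T$ to such an agent and remove that agent. Minimality together with subadditivity and the small-item condition gives $v_i(T) < \frac{1}{2} + \frac{1}{2} = 1$ for every remaining agent $i$. It remains to show that while an agent $i$ is still unserved, the remaining items $R$ satisfy $v_i(R) \ge \frac{1}{2}$, so the procedure never gets stuck.

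To bound $v_i(R)$, I would use
$$v_i(R) \ge \sum_j \lambda_j\, a^i_j(S_j \cap R) = \sum_j \lambda_j\, a^i_j(S_j) - \sum_j \lambda_j\, a^i_j(S_j\setminus R).$$
The first sum is at least $1$. So it suffices to show that the at most $n-1$ bundles $T$ allocated so far remove at most $\frac{1}{2}$ in total from the second sum. Equivalently, each allocated $T$ should cost at most roughly $\frac{1}{2n}\cdot v_i(T) \cdot c$ for a suitable constant $c$. This is where the balance condition (each item has coverage $\frac{1}{2n}$) has to be combined with the minimality of $T$.

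\textbf{Main obstacle.} I expect this per-bundle charging inequality to be the hard step. The naive bound $a^i_j(S_j \cap T) \le v_i(T)$ summed over $j$ is far too weak. The bound must exploit that, for each item $e$, the weights satisfy $\sum_{j \ni e}\lambda_j = \frac{1}{2n}$, and that removing any single item from $T$ leaves value below $\frac{1}{2}$ for all agents. My first attempt is to work with the single additive function $g_i = 2n\sum_j \lambda_j\, a^i_j|_{S_j}$. It satisfies $g_i(\items) \ge 2n$, so I would try to prove $g_i(T) \le 1$-ish for each minimal $T$. This would give $g_i(R) \ge 2n - (n-1) > n$, and by averaging $g_i$ back over the bundles $S_j$ some bundle would keep value at least $\frac{1}{2}$.

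If that fails, I would fall back on the original local-search argument of~\cite{GhodsiHSSY22BeyondAdditive} for the MMS case. There, with an integral $2n$-partition $P_1, \ldots, P_{2n}$, one counts how many parts can be ``destroyed'' (reduced below $\frac{1}{2}$ in clause value) by the $n-1$ minimal bundles, and shows that at least one part survives. I would then lift this counting to the fractional setting by weighting parts by $\lambda_j$, as observed in~\cite{FG25APSSubadditive}.
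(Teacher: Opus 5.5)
\textbf{Gap: the central claim of your Step 2 is false.} Your reduction to the APS version is fine. So is the large-item step (deleting the bundles through $e$ and renormalizing to coverage at most $\frac{1}{2n-1}\le\frac{1}{2(n-1)}$), and so is the bound $v_i(T)<1$ for minimal bundles. But when $T$ is an \emph{arbitrary} inclusion-minimal set, an unserved agent need not keep $v_i(R)\ge\frac12$, and the hoped-for bound $g_i(T)\lesssim 1$ fails.

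\emph{Counterexample.} Take $n=3$ and an even $K\ge 6$.
\begin{itemize}
\item Agent 1 has six disjoint groups $P_1,\ldots,P_6$ of $K$ items each, and $v_1(S)=\max_j |S\cap P_j|/K$. This is XOS with $APS(\items,v_1,\frac16)=MMS(\items,v_1,6)=1$.
\item Let $T_2,T_3$ be disjoint sets, each containing $K/2-1$ items from every $P_j$.
\item Agent 2 is additive. She spreads value $\frac12$ uniformly over $T_2$ and value $\frac{11}{2}$ over many tiny private items, so her $6$-part MMS and her APS at entitlement $\frac16$ both equal $1$. Agent 3 is symmetric, with $T_3$.
\end{itemize}
No item is worth $\frac12$ to anyone, so Step 1 does nothing.

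$T_2$ is inclusion-minimal. Indeed $v_2(T_2)=\frac12$, and after removing any item we get $v_2<\frac12$, $v_3=0$, and $v_1\le\frac{K/2-1}{K}<\frac12$. So your procedure may give $T_2$ to agent 2, and then, by the same reasoning, $T_3$ to agent 3. Now each $P_j$ retains only two items, so $v_1(R)=2/K<\frac12$ and agent 1 is stuck. Yet giving agents 2 and 3 private items and agent 1 the set $P_1$ satisfies everyone. Here $g_1(T_2)=6(K/2-1)/K\approx 3$, far above $1$.

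\textbf{Why this happens.} Minimality controls only $v_i(T)$, the best single clause on $T$. The loss to agent $i$'s fractional partition is instead the clause-sum $g_i(T)=2n\sum_j\lambda_j a^i_j(S_j\cap T)$. For additive $v_i$ this equals $v_i(T)$, which is why ordinary bag filling works there. For XOS it can exceed $v_i(T)$ by a factor approaching $2n$. So no charging argument over arbitrary minimal bundles can close the proof.

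The same example defeats your fallback as you phrase it. $T_2$ alone pushes none of agent 1's parts below $\frac12$, while $T_2$ and $T_3$ together push all six below. So there is no per-bundle bound on the number of destroyed parts when the bundles are arbitrary minimal sets. A correct argument must choose (or exchange) the allocated bundles with regard to the other agents' partitions, not just take any minimal set.

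The paper does not prove this lemma at all. It imports it from \cite{GhodsiHSSY22BeyondAdditive} for MMS and from \cite{FG25APSSubadditive} for the APS extension. So the only part of your proposal that actually yields the statement is that citation, and the mechanism you sketch for it would not go through as described.
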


We now describe a framework for allocation algorithms used in~\cite{GhodsiHSSY22BeyondAdditive, AkramiMSS23XOSBoBW, FG25APSSubadditive}. To simplify notation, we assume that valuations are scaled so that $MMS(\items, v_i, n) = 1$ for every agent.  We call this algorithm $ALG_t$, as it is parameterized by a positive integer {$t$}. It gives an approximation ratio of $\rho_t$.

\begin{enumerate}
    \item Among the remaining items and agents, let $S$ be the smallest set of items (breaking ties arbitrarily) for which there is an agent $i$ with $v_i(S) \ge \rho_t$. If $|S| \le t$, then allocate $S$ to $i$, remove $S$ and $i$ from the allocation instance, and repeat. If $|S| > t$, move to the next step.
    \item Let $\items'$ and $\agents'$ denote the sets of items and agents that remain, and let $n' = |\agents'|$. Using the algorithm of Lemma~\ref{lem:XOS}, give each agent $i \in \agents'$ a bundle of value at least $\frac{1}{2} \cdot APS(\items', v_i, \frac{1}{2n'})$.
    \item Allocate the remaining items arbitrarily.
\end{enumerate}

To show that $ALG_t$ gives at least $\rho_t \cdot MMS$, one needs to prove that if $|\agents'| > 0$, then $APS(\items', v_i, \frac{1}{2n'}) \ge 2\rho_t \cdot MMS(\items, v_i, n)$ for every $i \in \agents'$. For $t \le 4$, this is known to hold for $\rho_t = \frac{t}{4t + 1}$~\cite{FG25APSSubadditive}. For $t = 4$, one gets $\frac{4}{17}$-APS allocations (and hence also $\frac{4}{17}$-MMS). 

In Theorem~\ref{thm:whitebox} we get stronger ex-ante bounds than those of Corollary~\ref{cor:blackbox} that use Theorem~\ref{thm:blackbox} as a blackbox. This is because $ALG_t$ ultimately makes use of $APS(\items', v_i, \frac{1}{2n'})$ instead of $MMS(\items', v_i, n')$. 
This allows us to improve the analysis as follows. For the purpose of analysing the ex-ante bound, we shall lower bound $APS(\items', v_i, \frac{1}{2n'})$ by $MMS(\items', v_i, 2n')$. The techniques in the current paper allow us to go directly from $MES(\items', v_i, n)$ to $MMS(\items', v_i, 2n')$. This gives better bounds that using a two step path, from $MES(\items', v_i, n)$ to $MMS(\items', v_i, n')$, and only then to $APS(\items', v_i, \frac{1}{2n'})$. The other improvement is that in the first step of $ALG_t$, agents pick up to $t$ items and not just one item (which is the case for Theorem~\ref{thm:blackbox}). 
This will allow us to use a version of Lemma~\ref{lem:MMSversusMES} with $\beta = \frac{\eta}{t}$, instead of with $\beta = \eta$. 

The proof of Theorem~\ref{thm:whitebox} (and Lemma~\ref{lem:MMSversusMESXOS} that replaces Lemma~\ref{lem:MMSversusMES}) appears in Section~\ref{app:XOS} in the appendix.

 \section{Discussion}

Theorem~\ref{thm:blackbox} has the appealing feature that any improvement in ex-post ratios $\rho$ will automatically lead also to an improvement in the ex-ante ratios $\eta$. Theorem~\ref{thm:whitebox} shows that the value obtained for $\eta$ is sometimes better than the value $\frac{\rho}{2 + \rho}$ guaranteed by Theorem~\ref{thm:blackbox}. However, a limitation of our approach is that it cannot give a value of $\eta$ that is larger than $\rho$. This is because if there are no large items, RALG simply runs ALG (and the MES is at least as large as the MMS). 
Note that we may assume that there are input allocation instances in which no item has value larger than $\rho$, and the approximation ratio of ALG on these instances is no better than $\rho$. Otherwise, also in the presence of large items we could get an approximation ratio better than $\rho$: we could give the large items to agents that desire them (this does not decrease the MMS of other agents), and then run ALG on the remaining instance.

There is only one case that we are aware of in which the transformation of ALG to RALG indeed gives $\eta = \rho$. This is the case for additive valuations (for which MES is the proportional share, PS), and for the ex-post fairness notion of TPS (the truncated proportional share~\cite{BEF22BoBW}) rather than the MMS. It is known that $\frac{n}{2n-1}$-TPS allocations always exist, and that the ratio of $\frac{n}{2n-1}$ is best possible. Our transformation gives randomized allocations that are at least $\frac{n}{2n-1}$-TPS ex-post and at least $\frac{n}{2n-1}$-PS ex-ante. We omit the proof. This tradeoff is technically incomparable, but arguably, less desirable, than that offered by a randomized allocation algorithm of~\cite{BEF22BoBW}, which offers at least $\frac{1}{2}$-TPS ex-post and at least the full proportional share ex-ante.

For submodular valuations, our transformation should lead to ex-ante ratios that are better than those implied by Theorem~\ref{thm:blackbox}, but we have not attempted to determine what these ratios are.

\subsection*{Acknowledgments}

This research was supported in part by the Israel Science Foundation (grant No. 1122/22). AI tools were used so as to detect typos and other minor slips in the presentation, but were not used at all in the process of obtaining the results in this paper.

\bibliographystyle{alpha}



\newcommand{\etalchar}[1]{$^{#1}$}

\begin{appendix}

\section{Stronger ex-ante guaranties for XOS valuations}
\label{app:XOS}

In this section we prove Theorem~\ref{thm:whitebox}. Before that, we state and sketch the proof of a variation on Lemma~\ref{lem:MMSversusMES}, that applies when valuations are XOS.

\begin{lemma}
    \label{lem:MMSversusMESXOS}
    For any XOS valuation $v$, every $k$ and every sufficiently small (to be defined below) $\eta > 0$, if no set of up to~$t$ items in a bundle of the given MES partition has value larger than $\eta \cdot MES(\items, v, n)$, then setting $\beta = \frac{\eta}{t}$ we have:  $$MMS(\items, v, k) \ge \alpha_{k, \beta} \cdot MES(\items, v, n),$$ where
   $$\alpha_{k,\beta} = \frac{n - (k-1)\beta}{n + k -1} > \frac{n - k\beta}{n+k}$$
   In this lemma, $\eta$ is considered sufficiently small if the inequality $\eta \le \alpha_{k,\beta}$ holds with the respective $\alpha_{k,\beta}$.
\end{lemma}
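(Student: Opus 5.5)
The plan is to rerun the bundle-filling argument from the proof of Lemma~\ref{lem:MMSversusMES}. The only change is in how we bound the overshoot of a full bundle. Normalize so that $MES(\items, v, n)=1$, and let $A_1,\ldots,A_n$ be the given MES partition with $\sum_j v(A_j)=n$.

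Each $A_j$ is processed separately. Since $v$ is XOS, we fix an additive clause $a_j$ that supports $A_j$, meaning $a_j(A_j)=v(A_j)$ and $a_j(S)\le v(S)$ for every $S\subseteq A_j$. We sort the items of $A_j$ in decreasing order of $a_j$-value and fill bundles greedily in this order. A bundle is declared full once its $v$-value reaches $\alpha=\alpha_{k,\beta}$. We use $v$ rather than $a_j$ for this test because only $v$-values matter for the MMS.

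For the accounting we need two bounds. First, every partial (last) bundle of $A_j$ has $a_j$-value below $\alpha$, because its $v$-value is below $\alpha$ and $a_j\le v$. Second, every full bundle has $a_j$-value at most $\alpha+\beta$; this is the key step. With both in hand, additivity of $a_j$ gives
\[
v(A_j)=a_j(A_j)\le n_j(\alpha+\beta)+\alpha .
\]
Summing over $j$ and assuming fewer than $k$ full bundles yields $n<(n+k-1)\alpha+(k-1)\beta$, which is exactly the contradiction obtained in Lemma~\ref{lem:MMSversusMES}.

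The main obstacle is the overshoot bound with $\beta=\eta/t$. The hypothesis only controls sets of up to $t$ items, $v(T)\le\eta$ whenever $|T|\le t$. This gives average item weight at most $\eta/t$ over the top $t$ items, but not a per-item bound, and a single item may be worth nearly $\eta$. I would handle this in two cases.

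\textbf{Case 1: $|A_j|\le t$.} Then $v(A_j)\le\eta\le\alpha$, so there are no full bundles, and the partial bundle contributes at most $\eta\le\alpha$. This is where the smallness condition $\eta\le\alpha_{k,\beta}$ is used.

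\textbf{Case 2: $|A_j|>t$.} The $t$ largest $a_j$-items form a set $T$ with $a_j(T)\le v(T)\le\eta$. In sorted order, every item after the first $t$ therefore has $a_j$-value at most $\eta/t=\beta$. The first $t$ items together have $v$-value at most $\eta\le\alpha$, so they cannot complete a bundle on their own. Hence the item that closes any full bundle is either of $a_j$-value at most $\beta$, or it closes the first bundle.

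For the first bundle I would change the accounting. Its $a_j$-value is at most $a_j(T)+a_j(\text{rest})$, where the rest consists of items beyond position $t$. Just before its closing item was added, the bundle had $v$-value below $\alpha$. So I would charge the first bundle at most $\alpha+\beta$ by arguing as follows: the bundle minus its final item has $a_j$-value at most its $v$-value, which is below $\alpha$, and the final item is either beyond position $t$ (value at most $\beta$) or lies inside $T$. In the latter case the whole bundle is contained in $T$, so its value is at most $\eta\le\alpha$ and it is not full. This yields $a_j$ of every full bundle at most $\alpha+\beta$, and the rest of the argument goes through unchanged.
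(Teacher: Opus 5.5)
Your proposal is correct and is essentially the paper's proof. Both fix a supporting additive clause for each MES bundle, sort the bundle's items by it, and bundle-fill. Both observe that any closing item outside the top $t$ has clause-value at most $\eta/t$, while a bundle contained in the top $t$ has value at most $\eta\le\alpha_{k,\beta}$. Both then reuse the counting from Lemma~\ref{lem:MMSversusMES}.

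The differences are cosmetic. You test fullness with $v$, whereas the paper tests it with the clause; both work because the clause lower-bounds $v$. Also, when $\eta=\alpha_{k,\beta}$ a bundle inside the top $t$ can actually be full, contrary to your phrasing. Its clause-value is then at most $\alpha_{k,\beta}$, so your bound of $\alpha_{k,\beta}+\beta$ still holds.
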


\begin{proof}
The proof is similar to that of Lemma~\ref{lem:MMSversusMES}, with the following changes.

In the proof of Lemma~\ref{lem:MMSversusMES}, for each part $A_j$  of the MES partition we do a ``bundle filling" procedure. Here, we order the items of $A_j$ prior to starting the bundle filling. As $v$ is XOS, there is an additive valuation $v'$ such that $v'(A_j) = v(A_j)$ and $v'(S) \le v(S)$ for every $S \subset A_j$. Fixing such a $v'$, we order the items on $A_j$ in non-increasing order of their $v'$ values. We use this order in the bundle filling procedure, and declare a bundle full at the first time in which the $v'$ value of the bundle reaches or exceeds our target value $\alpha_{k,\beta}$. Importantly, the sum of values of the top $t$ items in this order is at most $\eta \le \alpha_{k,\beta}$. Hence, for a bundle to overflow it must have at least $t+1$ items. As every item except for the first $t-1$ items has $v'$ value at most $\frac{\eta}{t}$, the overflow in any bundle is by at most $\frac{\eta}{t}$. Hence, here $\frac{\eta}{t}$ serves the same role that $\beta$ serves in the bounds of Lemma~\ref{lem:MMSversusMES}.
\end{proof}

We now prove Theorem~\ref{thm:whitebox}.

\begin{proof}
    Consider an allocation instance with XOS valuations. We transform $ALG_t$ into a randomized allocation algorithm $RALG_t$ 
using the transformation appearing before Theorem~\ref{thm:blackbox}. (Recall that in the description of $ALG_t$, we assume that valuations are scaled so that $MMS(\items, v_i, n) = 1$. In our transformation, when we get to run $ALG_t$, we retain the scaling $MMS(\items, v_i, n) = 1$, implying that for the instance on which $ALG_t$ is run we have $MMS(\items', v_i, |\agents'|) \ge 1$.)

We analyse the value of $\eta$ that $RALG_t$ can guarantee. Let $\agents_1$ denote those agents $i$ for which 
\begin{equation}
    \label{eq:N_1}
    \rho \cdot MMS(\items, v_i, n) < \eta\cdot MES(\items, v_i, n)
\end{equation}
Note that an item $e$ is large for $i \in \agents_1$ if $v_i(e) \ge \eta \cdot MES(\items, v_i, n)$. 

The main technical difference between the proof of Theorem~\ref{thm:whitebox} and the proof of Theorem~\ref{thm:blackbox} is in the analysis of the ex-ante guarantees of an agent $i \in \agents_1$. Hence, we focus on this part. First, we recall some notation also used in the proof of Theorem~\ref{thm:blackbox}.

Let $L_i$ denote the set of items that are large for $i$, and let $H_i$ denote the set of items that are ``huge" for $i$, of value at least $2\eta \cdot MES(\items, v_i, n)$. Define $V(H_i)$ as $V(H_i) = \sum_{e \in H_i} v_i(e)$. 
We define $\gamma$ to be $\gamma = 1 - \frac{V(H_i)}{W(\items, v_i, n)}$. 

The following is a variation of Claim~\ref{cl:exante}.

\begin{claim}
    \label{cl:exanteXOS}
    Suppose that $\eta \le \frac{4}{27}$. Consider agent $i\in \agents_1$, and assume that her valuation is scaled so that $MES(\items, v_i, n) = 1$. Let $p_h$ denote the probability that in the first phase $i$ receives an item from $H_i$. Then  $i$ gets ex-ante at least the minimum of the following options.
    
    \begin{enumerate}
        \item $\eta$.
        \item $\max[1 - \gamma, 2p_h \eta] + (1 - p_h)\min[\eta, \frac{2\gamma - \eta}{12}]$.
    \end{enumerate}
\end{claim}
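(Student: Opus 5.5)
The proof follows that of Claim~\ref{cl:exante} almost verbatim for the contribution of $H_i$. Only the second-phase bound changes: we now appeal to the internals of $ALG_t$ and to Lemma~\ref{lem:MMSversusMESXOS}. Let $A_1,\ldots,A_n$ be the MES partition of agent $i$, scaled so that $\sum_j v_i(A_j)=n$. As before, getting a huge item contributes at least $2p_h\eta$, so we may assume $p_h\le \frac12$ and $|H_i|<n$. The random-permutation argument (agent $i$ is $k$th with probability $\frac1n$ and then receives one of her top $k$ huge items) contributes at least $\frac1n V(H_i)=1-\gamma$. This yields the term $\max[1-\gamma,2p_h\eta]$ and lets us assume $\gamma\ge 1-\eta$. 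For the items outside $H_i$ there are two cases. If $i$ takes a large item in the first phase of $RALG_t$, it lies in $L_i\setminus H_i$ and has value at least $\eta$. Otherwise $i$ reaches $ALG_t$, and we must show she gets at least $\frac{2\gamma-\eta}{12}$ there.

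For the second phase, let $t'$ be the number of non-huge items removed in the first phase, so $n'=n-|H_i|-t'$ agents remain. Each removed item has value below $2\eta$, so the remaining welfare satisfies $W'\ge \gamma n-2t'\eta$. Inside $ALG_t$, agent $i$ either receives a bundle in step~1 of value at least $\rho_t\ge \rho_t\cdot MMS(\items',v_i,n')$, which is fine since $i\notin$ case~1 of $\agents_1$'s bound only if this is already $\ge\eta$ (using the scaling and $\rho$ fixed; I would check that $\rho_t\cdot MMS\ge$ the target under the assumption $\eta\le\frac4{27}$). Or she survives to step~2, where the items remaining are $\items''$ with $n''$ agents. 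The key fact is that when step~1 stops, every set of at most $t$ remaining items has value less than $\rho_t\cdot MMS(\items,v_i,n)$ for every remaining agent, and $\rho_t\cdot MMS<\eta$ since $i\in\agents_1$. So no $t$ items have value more than $\eta$, and Lemma~\ref{lem:MMSversusMESXOS} applies with $\beta=\eta/t$ and $k=2n''$. Here the welfare is reduced by the items removed in step~1: each such bundle has at most $t$ items, and by minimality it has value less than $\rho_t+\eta/t$, roughly at most $2\eta$ per removed agent. This gives
$$MMS(\items'',v_i,2n'')\ge \frac{W''-2n''\eta/t}{n+2n''}.$$
Agent $i$ then gets at least $\frac12 APS(\items'',v_i,\frac1{2n''})\ge\frac12 MMS(\items'',v_i,2n'')$.

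We then minimize over the number of removed agents. As in Claim~\ref{cl:exante}, each removal costs at most about $2\eta$ of welfare but reduces the denominator by $2$. Because $\eta\le\frac4{27}$ is small and $\gamma\ge 1-\eta$, the bound is worst when nothing is removed. With $n''=n$ and $t=4$ this gives
$$\frac12\cdot\frac{\gamma n-2n\eta/4}{3n}=\frac{2\gamma-\eta}{12},$$
which is the claimed expression. Combining the two cases for the non-huge contribution with weight $1-p_h$ then yields option~2.

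The main obstacle is the accounting of welfare lost to step~1 bundles of $ALG_t$. These are sets, not single items, so subadditivity charges their full value, and we need a per-bundle bound, of value below about $2\eta$ by minimality of $S$. That bound must be small enough that the monotonicity argument ``worst case is no removals'' still goes through under $\eta\le\frac4{27}$. I would first try to verify the derivative sign of $\frac{(\gamma-\eta/2)n-c\,s\eta}{3n-2s}$ in $s$, with $c\approx 2$ or less. If that fails, I would use the XOS additive-support trick from Lemma~\ref{lem:MMSversusMESXOS} to charge each removed bundle only its $v'$-contribution.
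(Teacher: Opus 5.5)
\textbf{Main gap: the event that $i$ herself is served in step~1 of $ALG_t$.} Your treatment of $H_i$, and of the event that $i$ survives to step~2 of $ALG_t$, follows the paper's route (Lemma~\ref{lem:MMSversusMESXOS} with $k=2n''$, then minimizing over the number of removed agents). The gap is the event you flag yourself. There the only guarantee is $v_i(S)\ge\rho_t\cdot MMS(\items,v_i,n)$. Under the retained scaling $MMS(\items',v_i,n')\ge 1$, so your inequality $\rho_t\ge\rho_t\cdot MMS(\items',v_i,n')$ points the wrong way. With $\rho=\rho_t$, membership $i\in\agents_1$ says exactly that $\rho_t\cdot MMS(\items,v_i,n)<\eta$.

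So the check ``$\rho_t\cdot MMS\ge$ target'' cannot succeed whenever the target is $\eta$. This happens, e.g., when $H_i=\emptyset$: then $\gamma=1$ and $p_h=0$, and both options equal $\eta$, since $\frac{2-\eta}{12}>\eta$ for $\eta\le\frac{4}{27}$.

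This is a genuine failure of the bound, not a weak estimate. Let $v_i(S)=\epsilon\max_j|S\cap C_j|$ with $|C_1|=\cdots=|C_{n-1}|=\frac{1}{2\epsilon}$ and $|C_n|=\frac{n+1}{2\epsilon}$. This is XOS with $MES=1$ and $MMS=\frac{n+1}{2n}$. Add one item $e^*$ of value $\rho_t\cdot MMS(\items,v_i,n)\approx\frac{2}{17}$ as a further additive clause; this changes $MES$ by $O(\frac1n)$ and leaves $MMS$ unchanged. For large $n$:
\begin{itemize}
\item $i\in\agents_1$, and no item is large for $i$;
\item if the other agents value no singleton at their step-1 thresholds, step~1 hands $\{e^*\}$ to $i$;
\item she ends with about $\frac{2}{17}<\frac{4}{27}$.
\end{itemize}
The paper's own proof does not cover this event either. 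It analyses only an agent who reaches step~2 (its parenthetical remark is that otherwise ``the agent would not reach Step~2 of $ALG_t$''). So the claim as stated needs either a change in how step~1 of $ALG_t$ treats agents of $\agents_1$, or a weaker bound for this event. No verification along your lines can close it.

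\textbf{Secondary, repairable issue: the welfare charge for other agents' step-1 bundles.} Minimality of $S$ gives $v_i(S)<\rho_t\cdot MMS(\items,v_i,n)+\eta$, not $\rho_t+\eta/t$. In any case, $v_i(S)$ alone does not bound the welfare loss when $S$ meets several parts $A_j$. The paper instead charges $\sum_{e\in S}v_i(e)<t\eta=4\eta$ per such agent, and $2\eta$ per phase-one agent. This yields
\[
\frac{(\gamma-\frac{\eta}{2})n-\frac{7}{2}\ell\eta}{3n-2\ell},
\]
whose derivative in $\ell$ has the sign of $2\gamma-\frac{23}{2}\eta$. This sign is guaranteed nonnegative for all $\gamma\ge 1-\eta$ exactly when $\eta\le\frac{4}{27}$. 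That is where the hypothesis comes from, and no XOS refinement of the charge is needed. You should also verify the ``sufficiently small'' condition of Lemma~\ref{lem:MMSversusMESXOS} before applying it.
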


\begin{proof}
Let $A_1, \ldots, A_n$ denote the MES partition for agent $i$. By the assumption that $MES(\items, v_i, n) = 1$, we have that the total welfare $W_i$ satisfies $\sum_j v_i(A_j) = n$.


We consider separately the contribution of $H_i$ towards the ex-ante value of $i$, and the contribution of $\items \setminus H_i$.

For the contribution of $H_i$, the proof is similar to that of Claim~\ref{cl:exante}, and we get a lower bound of $\max[1 - \gamma, 2p_h \eta]$. 

We now turn to consider the contribution of the items of $\items \setminus H_i$. There are two possibilities for the source of this contribution, and the minimum of the two applies. One possibility is that $i$ gets an item from $L_i \setminus H_i$ in the first phase, implying that she gets value at least $\eta$. The other possibility is that $i$ reaches the second phase of $RALG_t$. We now analyse the contribution of the second phase.

Let $W_i^{-H} = \sum_j v_i(A_j \setminus H_i)$ denote the welfare that remains after all items of $H_i$ are removed. By definition of $\gamma$, $W_i^{-H} \ge \gamma n$. Note that the number of remaining agents decreased to $n - |H_i|$.
In addition, there were other items taken in the first phase (in particular, all items of $L_i \setminus H_i$). Also, there were agents who each took up to $t=4$ items in the first step of $ALG_t$. Let $\ell$ denote the number of additional agents beyond $|H_i|$ that took items either in the first phase or in step~1 of $ALG_t$. Each such agent either took one item of value at most $2\eta$, or at most $t$ items, each of value at most $\eta$. The welfare $W'_i$ that remains is at least $W'_i \ge \gamma n - \ell  t\eta$, and the number of agents that remain is $n' = n - |H_i| - \ell$.

Consider the set of items that remain at the beginning of step~2 of $ALG_t$, and denote it by $\items'$. 
No item $e \in \items'$ has value above $\eta$. Moreover, for every bundle $A_j$ of the original MES partition, no set of up to~4 items in $A_j \cap \items'$ has value above $\eta$. (If such a set existed, then by inequality~(\ref{eq:N_1}) its value would exceed the threshold used in Step~1 of $ALG_t$, and the agent would not reach Step~2 of $ALG_t$.) 
Scale $v_i$ to $v'_i = \frac{n}{W'_i} \cdot v_i$, so that $W(\items', v'_i, b) \ge n$. Relative to $v'_i$, no set of up to~4 items in a bundle of the given MES partition has value larger than $\eta \cdot \frac{n}{W'_i}$. 
Applying 
Lemma~\ref{lem:MMSversusMESXOS} with $k = 2n'$ (the lemma applies because for our parameters, $\eta$ satisfies the "sufficiently small" condition) we get:
$$MMS(\items', v'_i , 2n') \ge \frac{n - 2n' \frac{n}{4W'_i}\eta}{n + 2n'}$$
which after replacing $v'_i$ by $v_i$, and using $n' \le n - \ell$ and $W'_i \ge \gamma n - 4\ell \eta$, gives
$$MMS(\items', v_i , 2n') \ge \frac{W'_i - \frac{n' \eta}{2}}{n + 2n'} \ge \frac{(\gamma - \frac{\eta}{2})n - \frac{7}{2}\ell \eta}{3n - 2\ell}$$
As Claim~\ref{cl:exanteXOS} assumes that $\eta \le \frac{4}{27}$, and we may also assume that $\gamma \ge 1- \eta$ (recall the proof of Lemma~\ref{lem:MMSversusMES}), the above expression is minimized when $\ell=0$. Hence $MMS(\items', v_i , 2n') \ge \frac{(2\gamma - \eta)}{6}$.
    
As $RALG_t$ runs the algorithm of Lemma~\ref{lem:XOS} on $(\items', \agents')$, agent $i$ gets at least $\frac{1}{2} \cdot MMS(\items', v_i, 2n') \ge \frac{(2\gamma - \eta)}{12}$. This completes the proof of Claim~\ref{cl:exanteXOS}.
\end{proof}
    
Considering the two options in Claim~\ref{cl:exanteXOS}, we show that if the second option is smaller, namely $\max[1 - \gamma, 2p_h \eta] + (1 - p_h)\min[\eta, \frac{2\gamma - \eta}{12}] \le \eta$, then the second option is lower bounded by $\frac{2 - \eta}{12}$. As in the proof of Theorem~\ref{thm:blackbox}, 
the second option simplifies to 

$$\frac{1 - \gamma}{2} + p_h \eta + (1 - p_h)\frac{2\gamma - \eta}{12}$$

As we can assume that $\frac{2\gamma - \eta}{12} < \eta$, the expression is minimized when $p_h = 0$. As $\gamma \le 1$, the expression is minimized when $\gamma = 1$. Hence, the second option in Claim~\ref{cl:exanteXOS} is lower bounded by $\frac{2 - \eta}{12}$.
For the choice of 
$\eta = \frac{4}{27}$ (as in the statement of Theorem~\ref{thm:whitebox}), option 2 is higher than 
$\eta$, as desired. Hence, agents in $\agents_1$ get the required ex-ante guarantee. 

Other parts of the proof of Theorem~\ref{thm:whitebox} are similar to the proof of Theorem~\ref{thm:blackbox}.
\end{proof}

\section{A note on unequal entitlements}

Theorem~\ref{thm:blackbox} considers settings in which agents have equal entitlements. One would like to have a similar theorem for settings with unequal entitlements. In such settings, we suggest using the anyprice share (APS) as the ex-post share (as already done in Section~\ref{sec:XOS}), and the version of MES defined in~\cite{FG25APSSubadditive} as the ex-ante share. The following argument illustrates a difficulty in handling settings with unequal entitlements. If there is an item of very high value, we must give each agent some chance of winning it (otherwise, there cannot be a good ex-ante guarantee). In the equal entitlements case, regardless of which agent receives the item, the MMS of the remaining agents does not decrease. However, in the unequal entitlements case, if the agent who wins the item has very low entitlement, the APS of the remaining agents might decrease, even to~0. This makes it challenging (though not necessarily impossible) to give the remaining agents ex-post guarantees with respect to their original APS value.

\end{appendix}

\end{document}